\let\svtikzpicture\tikzpicture
\def\tikzpicture{\noindent\svtikzpicture}
\pgfplotsset{compat=1.15}
\newcommand{\floor}[1]{\left \lfloor #1 \right \rfloor}
\def \D{\mathcal{D}}
\def \M{\mathcal{M}}
\def \P{\mathcal{P}}
\def \S{\mathcal{S}}
\def \Z{\mathcal{Z}}
\def \fA{\mathbf{A}}
\def \fB{\mathbf{B}}
\def \fU{\mathbf{U}}
\def \fV{\mathbf{V}}
\def \fc{\mathbf{c}}
\def \fv{\mathbf{v}}
\def \fV{\mathbf{V}}
\def \fx{\mathbf{x}}
\def \fy{\mathbf{y}}
\def \fU{\mathbf{U}}
\def \fP{\mathbf{P}}
\def \fV{\mathbf{V}}
\def \fX{\mathbf{X}}
\def \fY{\mathbf{Y}}
\def \f0{\mathbf{0}}
\definecolor{blau_1a}{RGB}{93,133,195}
\definecolor{blau_2a}{RGB}{0,156,218}
\definecolor{gruen_3a}{RGB}{80,182,149}
\definecolor{gruen_4a}{RGB}{175,204,80}
\definecolor{gruen_5a}{RGB}{221,223,72}
\definecolor{orange_6a}{RGB}{255,224,92}
\definecolor{orange_7a}{RGB}{248,186,60}
\definecolor{rot_8a}{RGB}{238,122,52}
\definecolor{rot_9a}{RGB}{233,80,62}
\definecolor{lila_10a}{RGB}{201,48,142}
\definecolor{lila_11a}{RGB}{128,69,151}
\definecolor{blau_1b}{RGB}{0,90,169}
\definecolor{blau_2b}{RGB}{0,131,204}
\definecolor{gruen_3b}{RGB}{0,157,129}
\definecolor{gruen_4b}{RGB}{153,192,0}
\definecolor{gruen_5b}{RGB}{201,212,0}
\definecolor{orange_6b}{RGB}{253,202,0}
\definecolor{orange_7b}{RGB}{245,163,0}
\definecolor{rot_8b}{RGB}{236,101,0}
\definecolor{rot_9b}{RGB}{230,0,26}
\definecolor{lila_10b}{RGB}{166,0,132}
\definecolor{lila_11b}{RGB}{114,16,133}
\definecolor{mycolor1}{rgb}{0.0, 0.18, 0.39}
\definecolor{mycolor2}{RGB}{87,108,67}
\definecolor{mycolor3}{RGB}{8,133,161}
\definecolor{mycolor4}{RGB}{80,91,161}
\definecolor{mycolor5}{RGB}{98,122,157}
\definecolor{mycolor6}{RGB}{255,163,67}
\definecolor{mycolor7}{RGB}{152,205,225}
\definecolor{mycolor8}{RGB}{242,204,48}
\definecolor{mycolor9}{rgb}{0,.5,0}
\definecolor{mycolor10}{rgb}{.59,.44,.09}
\definecolor{mycolor11}{RGB}{231,199,31} % Yellow
\definecolor{mycolor12}{RGB}{8,133,161} % Cyan
\definecolor{mycolor13}{RGB}{157,188,64} % Yellow Green
\definecolor{mycolor14}{RGB}{194,150,130} % Light Skin
\definecolor{mycolor15}{RGB}{98,122,157} % Blue Sky
\definecolor{mycolor16}{RGB}{160,160,160} % Neutral
\definecolor{mycolor17}{RGB}{115,82,68} % Dark Skin
\definecolor{mycolor18}{RGB}{94,60,108} % Purple
\definecolor{mycolor19}{RGB}{115,82,68} % Dark Skin
\definecolor{mycolor20}{RGB}{255,183,30} % Dark Gold
\theoremstyle{remark} \newtheorem{theorem}{Theorem}
\theoremstyle{remark} 
\theoremstyle{remark} \newtheorem{definition}{Definition}
\theoremstyle{remark} \newtheorem{remark}{Remark}
\theoremstyle{remark} 
\providecommand{\customgenericname}{}
\newcommand{\newcustomtheorem}[2]{%
  \newenvironment{#1}[1]
  {%
   \renewcommand\customgenericname{#2}%
   \renewcommand\theinnercustomgeneric{##1}%
   \innercustomgeneric
  }
  {\endinnercustomgeneric}
}
\pgfplotsset{/pgf/number format/use comma,compat=newest}
\definecolor{amber}{rgb}{1.0, 0.49, 0.0}
\definecolor{cadmiumgreen}{rgb}{0.0, 0.42, 0.24}
\newtheoremstyle{styleth}%
{3pt}% Space above
{3pt}% Space below 
{}% Body font
{}% Indent amount
{\bfseries\color{mycolor3}}% Theorem head font
{}% Punctuation after theorem head
{.5em}% Space after theorem head
{}% Theorem head spec (can be left empty, meaning ‘normal’)
\theoremstyle{styleth}
\newtheoremstyle{styledef}%
{3pt}% Space above
{3pt}% Space below 
{}% Body font
{}% Indent amount
{\bfseries\color{cadmiumgreen}}% Theorem head font
{}% Punctuation after theorem head
{.5em}% Space after theorem head
{}% Theorem head spec (can be left empty, meaning ‘normal’)
\theoremstyle{styledef}
\def\fc{{\bf c}}
\def\fA{{\bf A}}
\def\fU{{\bf U}}
\def\fV{{\bf V}}
\def\fW{{\bf W}}
\def\fB{{\bf B}}
\def\fx{{\bf x}}
\def\fDelta{\boldsymbol{\mathbf{\Updelta}}}
\def\fSigma{\boldsymbol{\mathbf{\Sigma}}}
\pgfplotsset{compat=1.15}
\definecolor{codeBg}{rgb}{0.976, 0.949, 0.956}
\definecolor{codeColor}{rgb}{0.780, 0.145, 0.305}
\newtcbox{\bCode}{
    nobeforeafter,
    fontupper=\color{blau_2b!70!gray},
    colframe=codeBg,
    colback=gray!7,
    boxrule=0.1pt,
    arc=7pt,
    boxsep=0pt,
    left=3pt,
    right=3pt,
    top=3pt,
    bottom=4pt,
    tcbox raise base}
\tikzset{
  normal border/.style={mycolor2!4, decorate, 
     decoration={random steps, segment length=12mm, amplitude=2mm}},
  torn border/.style={orange!30!black!5, decorate, 
     decoration={random steps, segment length=.5cm, amplitude=1.7mm}}}
\newlength{\PRLlen}
\newcommand*\PRLsep[1]{\settowidth{\PRLlen}{#1}\advance\PRLlen by -\textwidth\divide\PRLlen by -2\noindent\makebox[\the\PRLlen]{\resizebox{\the\PRLlen}{1.4pt}{$\blacktriangleleft$}}\raisebox{-.5ex}{#1}\makebox[\the\PRLlen]{\resizebox{\the\PRLlen}{1.4pt}{$\blacktriangleright$}}\bigskip}
\DeclareMathOperator*{\argmax}{arg\,max}
\def\nudge{.5}
\tikzset{axis/.style={ultra thick, Red!75!black, -latex, shorten <=-\nudge cm, shorten >=-2*\nudge cm}}
\tikzset{line/.style={thick,Green}}
\definecolor{MK_Three_Five}{RGB}{127,191,123}
\crefname{equation}{Eq}{} % capitalize "E", no period
\font\myfont=cmr12 at 21.8pt
\font\myfontt=cmr12 at 19.84pt
\begin{document}

\title{{\myfont Identification over Affine Poisson Channels:} \\[-.1em] {\myfontt{Application to Molecular Mixture Communication Systems}}\vspace{-4mm}}
\author{\vspace{2mm} \fontsize{12}{12} \selectfont Mohammad Javad Salariseddigh\IEEEauthorrefmark{1}, Heinz Koeppl\IEEEauthorrefmark{2}, Holger Boche\IEEEauthorrefmark{3} and Vahid Jamali\IEEEauthorrefmark{1}
\vspace{.2mm}
\\
\fontsize{9}{9} \selectfont \IEEEauthorrefmark{1}Resilient Communication Systems Group, Technical University of Darmstadt (TUDa), \IEEEauthorrefmark{2}Self-Organizing Systems Lab, TUDa
\\
\fontsize{9}{9} \selectfont \IEEEauthorrefmark{3}Chair of Theoretical Information Technology, Technical University of Munich
}

\maketitle

% ---
\begin{abstract}
Identification capacity has been established as a relevant performance metric for various goal-/task-oriented applications, where the receiver may be interested in only a particular message that represents an event or a task. For example, in olfactory molecular communications (MCs), odors or pheromones, which are often a mixture of various molecule types, may signal nearby danger, food, or a mate. In this paper, we examine the identification capacity with deterministic encoder for the discrete affine Poisson channel which can be used to model MC systems with molecule counting receivers. We establish lower and upper bounds on the identification capacity in terms of features of the affinity matrix between the released molecules and receptors at the receiver. As a key finding, we show that even when the number of receptor types scales sub-linearly in the number of molecule types $N,$ the number of reliably identifiable messages can grow super-exponentially with the rank of the affinity matrix, $T,$ i.e., $\sim 2^{(T \log T)R},$ where $R$ denotes the coding rate. We further derive lower and upper bounds on $R,$ and show that the proposed capacity theorem includes several known results in the literature as its special cases.
\end{abstract}
% ---

\IEEEpeerreviewmaketitle

\vspace{-1mm}
% ---
\section{Introduction}
In the identification problem \cite{J85,Salariseddigh_IT,AD89}, appropriate communication strategies enable the receiver to reliably answer the following question: Whether a specific message (which is e.g., relevant for a specific task), is sent by the transmitter or not? In other words, in contrast to the message transmission problem of Shannon \cite{S48} where the decoder seeks to determine which message from the entire codebook was sent, here, the decoder is solely focused on a specific message and tries to identify whether or not that message was sent. Prior to the initiation of communication, the transmitter is unaware of the specific message that the receiver seeks for identification purposes.

In the identification problem for a discrete memoryless channel (DMC), when employing deterministic encoders, the codebook size exhibits an exponential growth with the codeword length \cite{Salariseddigh_IT} (mirroring the canonical behavior observed in \cite{S48}). However, under a randomized encoding scheme the identification capacity features a dramatically amplified behavior, namely, the requisite codebook size scales \emph{double exponentially} in the codeword length $N,$ i.e., $\sim 2^{2^{NR}}$ \cite{AD89}. In line with Shannon's findings \cite{S48}, Ahlswede developed a celebrated identification capacity theorem for a DMC and established that identification capacity is equal to its Shannon capacity \cite{AD89}. For the proof of achievability and converse, Ahlswede employed an intricate combination of tools from the the method of types and set theory. Recently, the work \cite{Colomer_2025} extends the identification problem for memoryless channels with finite output and arbitrary input alphabets. Identification problem has received a considerable attention for studying a variety of emerging applications in the context of post Shannon and semantic goal-oriented communications; see \cite{Salariseddigh23_BSC_Future_Internet,6G_PST} for more discussions. In particular, the identification problem may be adopted to event recognition which find applications in the context of 6G wireless networks \cite{6G_PST}, disease detection in molecular communications (MC) \cite{6G+} and discriminating olfactory percepts in natural olfaction \cite{Varshney16}.

Identification is particularly relevant for MC systems, where instead of transmitting a large amount of data, the objective is often to identify an event or communicate a task. For instance, in olfactory communications, pheromones and odorants are molecular mixtures (a message consists of combination of distinct types of molecules \cite{Buck05}) that are used for signaling danger, locating food, initiating mating, etc. In particular, the human olfactory system, with its hundreds of different olfactory receptors ($\sim 400$ types) can discriminate at a huge number of stimuli \cite{Bushdid14}. See \cite{Meister14,Meister15} for revisiting the mathematical approach used in \cite{Bushdid14} and refinements regarding the rate. That is, the purpose of communication in such scenarios may involve the \textit{recognition} of a specific type of secreted odor/pheromone/stimuli. Thereby, the identification problem \cite{Jamali22} may be adopted to natural olfaction. Here, we focus on the affine Poisson channel, where the Poisson statistics model the random molecule counting process and the affinity matrix, $\fA,$ in the affine model characterizes the connection between the number of activated receptors and the number/rate of released molecules of various types. We note that as its special case, the affine Poisson channel can model conventional time-slotted MCs, where each channel use corresponds to release of molecules of the same types, by choosing $\fA$ to be the identity matrix for inter-symbol interference (ISI)-free channel \cite{Salariseddigh-TMBMC} and $\fA$ assuming a Toeplitz structure for ISI channel \cite{Salariseddigh_OJCOMS_23}.
Furthermore, in addition to modeling molecule-counting receivers in MC systems, the Poisson channel has also been used to model photon-counting receivers in optical communication systems \cite{Salariseddigh-TMBMC}, and packet traffic arrival in conventional telecommunication networks \cite{Jain_86}. While the identification capacity has been studied in previous studies for ISI-free \cite{Salariseddigh-TMBMC} and ISI Poisson channels \cite{Salariseddigh_OJCOMS_23}, to the best of the authors’ knowledge, it has not been investigated for the general affinity Poisson channel, yet.

In this paper, we consider identification systems over the discrete affine Poisson channels (DAPC) employing deterministic encoder in the presence of average and peak molecule release rate constraints, and make the below contributions:
\vspace{-1mm}
% ---
\begin{itemize}[leftmargin=*]
    \item \textbf{\textcolor{blau_2b}{Generalized Poisson Model:}} Capacity analysis often studies an asymptotically large number of channel uses, whereby in conventional time-slotted MC systems, the number of transmitted symbols and the number of received observations are the same (or linearly related). In the considered affine Poisson channel, the number of molecule types in the molecular mixtures, denoted by $N,$ and the number of receptor types, denoted by $K,$ are analogous to the number of transmitted symbols and received observations, respectively. We formulate the capacity problem where $N$ and $K$ are scaled differently with respect to each other. Moreover, we show that reliable identification is possible even when $K$ grows sub-linearly in $N,$ i.e., $K = N^{\kappa}$ where $\kappa \in (0,1]$ is called the receptors rate in $N.$
    \item \textbf{\textcolor{blau_2b}{Codebook Scale:}} We establish that the codebook size of the DAPC for deterministic encoding scales super-exponentially in the rank of the affinity matrix, $T,$ i.e., $\sim 2^{(T \log T)R},$ where $R$ is the coding rate. Interestingly, this scaling behavior coincides with the scaling behavior as observed for conventional ISI and ISI-free Poisson channels, where the number of observations (i.e., $T \propto K \propto N$) scales linearly in the number of transmitted symbols, despite the fact that here the number of observations in our setup only grows sub-linearly in $N,$ i.e., $K = N^{\kappa}$ with $\kappa \in (0,1].$
    \item \textbf{\textcolor{blau_2b}{Capacity Bounds:}} We derive lower and upper bounds on $R$ for a general class of affinity matrices $\fA$ where the bounds are a function of the parameters of $\fA.$ In particular, these bounds depend on the dimension of $\fA$ (namely how the number of receptors $K$ scales with the number of molecule types $N$) and the sparsity of $\fA$ (i.e., the number of non-zero elements in each row of $\fA$). More specifically, assume that the number of non-zero entries in each row of $\fA,$ denoted by $F_k,$ scales at most sub-linearly in the rank of the affinity matrix, $T,$ i.e., $F_k = \mathcal{O}(T^l),$ for every $k \in \{1,2,\ldots,K\},$ where $l \in [0,1)$ is referred to as the non-zero rate across all rows. Then, the lower and upper bounds on the capacity, decreases and increases in $l,$ respectively.
    \item \textbf{\textcolor{blau_2b}{Proof Novelty:}}
    For the proposed lower bound, we show that there exists a suitable arrangement of spheres where the distance between the centers of the spheres is above a specific threshold. This packing incorporates the effect of two key features of the affinity matrix $\fA,$ namely, the non-zero rate across all rows $l,$ and the receptors rate, $\kappa.$
\end{itemize}
% ---
\vspace{-1mm}
\textbf{Notations:}
Alphabet sets are denoted by blackboard letters $\mathbb{X,Y},\ldots.$ The set difference is indicated by $\mathbb{X}\setminus\mathbb{Y}.$ Lowercase letters $x, y, \ldots$ represent constants and values of random variables (RVs), while uppercase letters $X, Y, \ldots$ represent RVs. Row vectors are shown by lowercase bold symbols $\fx$ and $\fy.$ The operator $\dag$ shows the transpose of a vector. The symbol $\langle.,.\rangle$ shows the inner product. $\fX_{P \times G}$ denotes that matrix $\fX$ has $P$ rows and $G$ columns with $\fX_p$ and $\text{rank}(\fX)$ denoting its $p$-th row and rank, respectively. % and $t$-th largest singular value $\sigma_t(\fX)$ 
The determinant of matrix $\fX$ is represented by $\text{det}(\fX)$. All logarithms are in base $2.$ The range of sequential positive integers from $1$ to $M$ is shown as $[\![M]\!].$ The set of whole numbers and non-negative real numbers are denoted by $\mathbb{N}_{0}$ and $\mathbb{R}_{+},$ respectively. The Gamma function is denoted by $\Gamma(x)$ for positive integer $x$ and is given by $\Gamma (x) = (x-1) ! \triangleq (x-1) \times \dots \times 1.$ We use the standard \emph{Bachmann-Landau asymptotic notation}, e.g., $o(\cdot)$ and $\mathcal{O}(\cdot).$ The $\ell_2$-norm and $\ell_{\infty}$-norm of $\fx$ are specified by $\norm{\mathbf{x}}$ and $\norm{\mathbf{x}}_{\infty},$ respectively. A hyper sphere of dimension $n,$ radius $r,$ and center $\mathbf{x}_0$ is defined as $\S_{\fx_0}(N,r) = \{\fx\in\mathbb{R}_{+}^N : \norm{\fx-\fx_0} \leq r \}.$ The $N$-dimensional cube whose center, edge length, and one of its corners are $(U/2,\ldots,U/2), U,$ and $\mathbf{0} = (0,\ldots,0),$ respectively, is symbolized by $\mathbb{Q}_{\f0}(N,U) = \{\fx \in \mathbb{R}_{+}^N : 0 \leq x_n \leq U, \forall \, n \in [\![N]\!] \}.$ The Kronecker delta as a function of two variables $\alpha$ and $\beta$ is specified by $\delta(\alpha,\beta)$ and is defined as $\delta_{\alpha,\beta} = 0$ if $\alpha \neq \beta$ and $\delta_{\alpha,\beta} = 1$ if $\alpha = \beta.$ 

% --------------------
\section{System Model and Coding Preliminaries}
In the following, we first present the affine Poisson channel in the context of olfactory-inspired MCs. Subsequently, we formally state the definition of identification capacity.

\subsection{Affine Poisson Channel}
\label{Sec.SysModel}
We assume that information about each task is represented by a mixture of $N$ different types of molecules. That is, the message is encoded within the signaling molecule type, and the transmitter controls and releases a mixture of molecules of different types with rate $x_n$ (molecules/channel use) over a time interval of $T_{\rm s},$ simultaneously, to encode the message, with $n \in [\![N]\!].$ The channel response for the $n$-th type of molecules is denoted by $v_n, \, n \in [\![N]\!]$ whereby the quantity $x_n v_n$ is the expected number of molecules that arrive at the receiver. When the number of molecules is large, the received signal follows Poisson statistics. Moreover, we assume that the receiver is equipped with $K$ different types of receptors, each may react to different types of molecules, i.e., a cross-reactive receptor array \cite{Jamali22}. Let $X_n \in\mathbb{R}_{+}$ and $Y _k \in \mathbb{N}_0$ denote RVs modeling the release rate of molecules of type $n$ by the transmitter and the number of bound receptors of type $k$ at the receiver, respectively. The channel input-output relation reads
% ---
\begin{align}
    Y_k = \text{Pois}\Big( \sum_{n=1}^N a_{k,n} v_n X_n + \lambda_k \Big) ,
\end{align}
% ---
where $a_{k,n} \in \mathbb{R}_{+}$ is known as the affinity between the receptor of type $k$ and molecule of type $n,$ which does not scale with the number of molecule type $N,$ and $\lambda_k$ stands for expected number of interfering molecules from undesired sources. The letter-wise channel model in compact matrix form is given by
% ---
\begin{align}
    \fY = \text{Pois}( \bar{\fA} \fX + \bm{\lambda} ) ,
\end{align}
% ---
where $\fX,\fY,$ and $\bm{\lambda}$ are input, output, and interference vector, i.e., $\fX_{N \times 1} = [X_1,\ldots,X_N]^{\dag}, \fY_{K \times 1} \allowbreak = [Y_1,\ldots,Y_K]^{\dag},$ and $\bm{\lambda}_{K \times 1} = [\lambda_1,\ldots,\lambda_K]^{\dag}.$ Furthermore, $\bar{\fA} \triangleq \fA \fV$ where $\fV$ and $\fA$ stand for the diagonal channel response matrix and affinity matrix, i.e., $\fV_{N \times N} = \text{diag}(\fv)$ with $\fv \triangleq (v_1,\ldots,v_N)$ and $\fA_{K \times N} = [a_{k,n}],$ for $k \in [\![K]\!], \, n \in [\![N]\!].$ We specify the rank of the affinity matrix $\fA_{K \times N}$ by $T,$ i.e., $T = \text{rank}(\fA) \leq \min\{K,N\} = K.$ Moreover, since the channel response matrix $\fV$ is a square diagonal matrix consisting of non-zero entries on the diagonal, it is full rank, i.e., $\text{rank}({\fV}) = N,$ hence it holds that $\text{rank}(\bar{\fA}) = \text{rank}(\fA) = T.$ The transition probability distribution for $K$ observations reads
% ---
\begin{equation}
\resizebox{.42\textwidth}{!}{
  $\begin{aligned}
    W^{K}(\fy|\fx) = \prod_{k=1}^{K} W(y_k|\fx)
    = \prod_{k=1}^{K} \frac{e^{- \bar{x}_k } \left( \bar{x}_k  + \lambda_k \right)^{y_k}}{{y_k}!} ,
    \vspace{-7mm}
  \end{aligned}$}
\end{equation}
% ---
% ---
where $\bar{x}_k \triangleq \sum_{n=1}^N a_{k,n} v_n x_n,$ $\fx = (x_1,\dots,x_N)$ and $\fy = (y_1,\dots,y_K)$ represent the transmitted codeword (i.e., a molecular mixture) and received receptor observations, respectively. The codewords are subjected to constraints $0 \leq x_n \leq C_{\rm max},\, \forall n \in [\![N]\!]$ and $N^{-1} \sum_{n=1}^{N} x_n \leq C_{\rm avg},$ respectively, where $C_{\rm max}, C_{\rm avg} > 0$ constrain the rate of molecule release per molecule type and across $N$ types of molecules in each mixture, respectively. To have a well-posed problem, we assume $\forall k \in [\![K]\!]$ and $\forall n \in [\![N]\!]$:
% ---
\begin{itemize}[leftmargin=*]
    \item $0 < v_{\text{min}} \leq v_n \leq v_{\text{max}},$ with $v_{\text{min}}$ and $v_{\text{max}}$ being constants.
    \item $0 < \lambda_{\text{min}} \leq \lambda_k \leq \lambda_{\text{max}},$ with $\lambda_{\text{min}}$ and $\lambda_{\text{max}}$ being constants.
    \item $0 \hspace{-.5mm} < \hspace{-.5mm} A_{\text{min}} \hspace{-.5mm} \leq \hspace{-.5mm} |a_{k,n}| \hspace{-.5mm} \leq \hspace{-.5mm} A_{\text{max}},$ holds for $k,n$ such that $a_{k,n} \neq 0$ where $A_{\text{min}}$ and $A_{\text{max}}$ being constants.
\end{itemize}
% ---

% ---
\subsection{Identification Coding}
The definition of an identification code for DAPC $\P_{\rm \fA}$ is formally presented in the following.
% ---
\begin{definition}
% [Affine Poisson Identification Code]
\label{Def.Affine_Identification_Code}
An $( N, M_N(R), \varepsilon_1, \varepsilon_2 )$-code for a DAPC $\P_{\rm \fA}$ with affinity matrix $\fA$ subject to average and peak constraints $C_{\rm avg} $ and $ C_{\rm max},$ respectively, and for integer $M_N(R)$ where $N$ and $R$ are the molecule types and coding rate, respectively, is defined as a system $(\mathbb{C}_0,\mathscr{Z})$ consisting a codebook $\mathbb{C}_0 = \{ \fc^i \}$ such that $0 \leq c_n^i \leq C_{\rm max}$ and $N^{-1} \sum_{n=1}^{N} c_n^i \leq C_{\rm avg},$ $\forall i \in [\![M]\!],\,\forall n \in [\![N]\!],$ and a collection of decoding regions $\mathscr{Z} = \{ \Z_i \}.$
% ---
Given a message $i \in [\![M]\!],$ the encoder transmits codeword $\fc^i,$ (i.e., a molecular mixture), the decoder's aim is to answer the below question: Was a desired message $j,$ (i.e., molecular mixture $j$), sent or not?
% See Figure~\ref{Fig.SysMdl}. 
There are two errors that may occur: Rejection of the true message or acceptance of a false message, referred to as type I and type II errors, respectively, which read
% ---
\begin{align}
\label{Eq.Errors}
\hspace{-1mm}
    P_{e,1}(i) \hspace{-.5mm} = \hspace{-.5mm} 1 \hspace{-.3mm} - \hspace{-.6mm} \sum_{\Z_i} \hspace{-.5mm} W^K \big( \fy \, | \, \fc^i \big), P_{e,2}(i,j) \hspace{-.5mm} = \hspace{-.5mm} \sum_{\Z_j} \hspace{-.6mm} W^K \big( \fy \, | \, \fc^i \big). \hspace{-1mm}
    % \vspace{-2mm}
\end{align}
% ---
It must holds $P_{e,1}(i) \leq \varepsilon_1$ and $P_{e,2}(i,j) \leq \varepsilon_2, \forall \, i,j \in [\![M]\!]$ such that $i \neq j, \allowbreak \, \forall \varepsilon_1, \allowbreak \varepsilon_2 \allowbreak > 0.$ A rate $R > 0$ is achievable if $\forall \varepsilon_1,\varepsilon_2 > 0$ and sufficiently large $N,$ there exists an identification code meeting the error conditions in \eqref{Eq.Errors}. Identification capacity for the DAPC $\P_{\rm \fA},$ is denoted by $\mathbb{C}_{\rm I}(\P_{\rm \fA}
),$ and is supremum of all achievable rates.
\qed
\end{definition}
% ---
\begin{remark}
The size of the codebook $M_N(R)$ is a sequence of monotonically non-decreasing function in $N.$ Moreover, we assume that the rank of the affinity matrix scales sub-linearly in the molecule types $N,$ i.e., $T = N^{\kappa\tau}$ with $\kappa,\tau \in (0,1].$ In particular, in this paper we assume the following function:
$$M_N(R) = 2^{(\kappa\tau N^{\kappa\tau}\log N)R} = 2^{(T\log T)R} = M_T(R).$$
Cf. Appendix~\ref{App.Codebook_Spectrum} for more discussions on the codebook sizes.
\end{remark}
% ---
\iffalse
\begin{definition}
% [Message Transmission and Identification Coding Rates]
    The codebook scale $M_T(R)$ for the message transmission problem \cite{S48} or identification problem \cite{AD89} is understood to be a sequence of monotonically increasing function of the rate $R.$ Hence, given a codebook size $M = M_T(R),$ the coding rate can be obtained as $R = M_T^{-1}(M).$ For the sake of accurateness, we characterize/distinguish the three following coding rates for various communication settings:
    % ---
    \begin{itemize}[leftmargin=*]
        \item Message transmission problem \cite{S48} or identification problem with deterministic encoder for DMC \cite{Salariseddigh_IT}: $R = \log M/n.$
        \item Identification problem with randomization at encoder for DMC \cite{AD89}: $R = \log\log M/n.$
        \item Identification problem with deterministic encoder for continuous input alphabet channels, i.e., Gaussian channels \cite{Salariseddigh_ITW}, Poisson with and without ISI \cite{Salariseddigh-TMBMC,Salariseddigh_OJCOMS_23}, binomial channels \cite{Salariseddigh_Binomial_ISIT}, and affine Poisson channels: $R = \log M/(n\log n).$
    \end{itemize}
    % ---
\end{definition}
% ---
\fi
% ---
\section{Identification Capacity of the DAPC}
\label{Sec.Res}
We start this section with presenting several key questions that we aim to answer, and are relevant in the design of a synthetic olfactory MC system. Subsequently, we provide our main theorem and draw various insights to answer the previously posed questions. Finally, we provide a sketch of the converse and achievability proofs of the capacity theorem. In particular, with the capacity analysis, we hope to shed light on the following questions:

% ---
\begin{itemize}[leftmargin=*]
    \item \textbf{Number of Identifiable Mixtures:} How many molecular mixtures (messages) can be reliably identified utilizing asymptotically large $N,$ i.e., different types of molecules?
    \item \textbf{Number of Required Receptor Types:} Does the number of receptors $K$ at the receiver need to scale linearly in $N$? Can we ensure reliable identification for $K = o(N)$?
    \item \textbf{Characteristics of Affinity Matrix:} What structure does the molecule-receptor affinity matrix must fulfill to guarantee a reliable identification? What are generic properties of an eligible affinity matrix, e.g., sparsity, rank, etc?
\end{itemize}
% ---

% ---
\subsection{Main Results}
\label{Subsec.Main_Results}
The identification capacity depends on the properties of the affinity matrix $\fA,$ and how they asymptotically scale with $N.$ Thus, before presenting our capacity results, we introduce a class of $\fA,$ which is defined by the below given four conditions. These conditions characterize the necessary requirements on $\fA,$ that allow reliable identification. Hence, these conditions not only provide a rigorous basis for the proposed capacity theorem, but also offer insights for choosing molecule and receptor types in practice.

% ---
\textbf{C1:} The number of rows of $\fA,$ i.e., the number of receptors $K,$ scales sub-linearly in the molecule types $N,$ i.e., $K = N^{\kappa},$ with $\kappa \in (0,1]$ being the receptors rate.

\textbf{C2:} Let $F_k$ be the number of non-zero entries in each row of $\fA$ and the following property: $F_k$ scales at most sub-linearly in $T,$ i.e., $F_k = \allowbreak \mathcal{O}(T^l) ,\, \forall k \in [\![K]\!],$ where $l \in [0,1)$ being the non-zero rate across all rows.

\textbf{C3:} We assume that the rank of the matrix $\fA$ scales sub-linearly in the number of observations $K,$ i.e., $T = K^{\tau} = N^{\kappa\tau}$ with $(\kappa + (1 - \delta_{\kappa,1})l)^{-1} \leq \tau \leq 1.$

\textbf{C4:} We assume that certain $T$ columns of the affinity matrix fulfill an orthogonality condition. Specifically, let us define:
% ---
\begin{itemize}[leftmargin=*]
    \item $\bar{\fA}_{\mathbb{G}_T}$ denotes the $K \times T$ sub-matrix of $\bar{\fA}$ consisting $\mathbb{G}_T$ columns with $\mathbb{G}_T \triangleq \allowbreak \{n_1,\ldots,n_T\}$ being the set of $T$ column indices with $n_1 < n_2 < \allowbreak \ldots \allowbreak < n_T.$
    \item $\bar{\bar{\fA}}_{\mathbb{G}_T} \triangleq \bar{\fA}_{\mathbb{G}_T}^{\dag} \bar{\fA}_{\mathbb{G}_T}$ and $f(\mathbb{G}_T) \triangleq \text{det}\big( \bar{\bar{\fA}}_{\mathbb{G}_T} \big) .$
    \item $\mathbb{G}_{T^*} \triangleq \underset{\mathbb{G}_T \in \{ \mathbb{G}_T' \subset [\![N]\!],\, |\mathbb{G}_T'| = T \} }{\argmax} \, f(\mathbb{G}_T) .$
    \item $\bar{\bar{\fA}}_{\mathbb{G}_{T^*}}^{n_t}$ indicates the $n_t$ column of matrix $\bar{\bar{\fA}}_{\mathbb{G}_{T^*}}.$
\end{itemize}
% ---
Then, we assume that the number of columns in the set $\mathbb{G}_{T^*}$ that are not orthogonal to each other is $ o(T),$ i.e., we have $\big| t'; \langle \bar{\bar{\fA}}_{\mathbb{G}_{T^*}}^{n_t'}.\bar{\bar{\fA}}_{\mathbb{G}_{T^*}}^{n_t} \rangle \neq 0 \big| = o(T), \forall n_t' \in \mathbb{G}_{T^*},\forall n_t \in \mathbb{G}_{T^*} \hspace{-.4mm} \setminus \hspace{-.3mm} n_t'.$

\vspace{2mm}
% Now, we present our capacity theorem for the DAPC.
Now, we proceed to present our capacity theorem for the DAPC in the following theorem.

% ---
\begin{theorem}
\label{Th.Affine-Capacity}
Let $\P_{\rm \fA}$ be a DAPC with affinity matrix $\fA_{K \times N}$ fulfilling conditions \textbf{C1}-\textbf{C4} with $0 \leq l < 1/4.$ Then, the identification capacity of $\P_{\rm \fA}$ subject to average and peak molecule release rate constraints in Definition~\ref{Def.Affine_Identification_Code} in the super-exponential scale, i.e., $M_T(R) = 2^{(T\log T)R},$ reads
% ---
\begin{align}
    \label{Ineq.LU}
    \hspace{-2mm} \frac{1}{2} - \Big( \frac{\kappa}{4} + l \Big) \leq \mathbb{C}_{\rm I}(\P_{\rm \fA}) \leq \frac{1}{2} + \kappa + l + 2(1 - \delta_{\kappa,1})l . \hspace{-2mm}
\end{align}
\end{theorem}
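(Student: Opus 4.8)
The statement is a sandwich bound, so the natural plan is to prove the lower (achievability) and upper (converse) inequalities separately. Both arguments should live not in the ambient $N$-dimensional input space but in the effective $T$-dimensional space cut out by the rank of $\bar{\fA}=\fA\fV$; condition \textbf{C4} is what makes this reduction clean, since it guarantees that the $T$ columns indexed by $\mathbb{G}_{T^*}$ are near-orthogonal, so the Gram matrix $\bar{\bar{\fA}}_{\mathbb{G}_{T^*}}$ is essentially diagonal up to the constants $A_{\min},A_{\max},v_{\min},v_{\max}$, and Euclidean separation of codeword increments transfers faithfully to Euclidean separation of the induced Poisson mean vectors $\bar{\fx}=\bar{\fA}\fc$.

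For the achievability bound I would construct an explicit code whose codewords are supported on the coordinates in $\mathbb{G}_{T^*}$ and whose mean vectors $\bar{\fx}^i=\bar{\fA}\fc^i$ are the centers of a sphere packing, taking each decoding region $\Z_i$ to be the typical output shell of $\fc^i$. A Chernoff/Chebyshev estimate on the Poisson sum $\sum_k Y_k$ controls the type~I error, while the minimum center separation controls the type~II error. The delicate point is the separation threshold: since the $k$-th output has mean $\bar{x}_k+\lambda_k=\mathcal{O}(T^l)$ (each row activates at most $F_k=\mathcal{O}(T^l)$ molecule types, each contributing a bounded amount) and hence standard deviation $\mathcal{O}(T^{l/2})$, two codewords become statistically distinguishable once their effective coordinates differ by an amount that is only polynomially small in $T$, rather than by a constant. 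Counting the packing centers that fit into the feasible box $0\le c_n\le C_{\rm max}$ (the average constraint $N^{-1}\sum_n c_n\le C_{\rm avg}$ is inactive here, being slack whenever $T=N^{\kappa\tau}=o(N)$) then gives $\log M\approx T\log(C_{\rm max}/d_{\min})$, and substituting the threshold $d_{\min}$ yields $M=2^{(T\log T)R}$ with $R=\tfrac12-(\tfrac{\kappa}{4}+l)$. Here the leading $\tfrac12$ is the Poisson square-root law (a separation $\sim 1/\sqrt{\cdot}$ gives $\tfrac12\log$ distinguishable levels per dimension), while the corrections $\tfrac{\kappa}{4}$ and $l$ quantify, respectively, how the number of observations $K=N^\kappa$ (equivalently the rank growth $T=N^{\kappa\tau}$) sets the packing dimension and how row sparsity inflates the per-coordinate noise variance to $\mathcal{O}(T^l)$.

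For the converse I would argue that reliable identification forces the output laws $W^K(\cdot\mid\fc^i)$ of distinct messages to be pairwise separated in total variation (otherwise no choice of regions can keep both $P_{e,1}$ and $P_{e,2}$ below the thresholds), and then cover the feasible set of attainable mean vectors by balls of the smallest admissible radius. Bounding $M$ by the number of such balls, i.e.\ by a volume ratio in the $T$-dimensional effective space, produces the upper exponent $\tfrac12+\kappa+l+2(1-\delta_{\kappa,1})l$. The dependence of the admissible covering radius on $K=N^\kappa$ and on $l$ yields the terms $\kappa$ and $l$, and the term $2(1-\delta_{\kappa,1})l$ switches off exactly when $\kappa=1$, because then \textbf{C3} forces $\tau=1$, hence $T=N$, and the sparsity-induced covering slack disappears.

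The main obstacle is the achievability packing. Getting the exponent right requires simultaneously (i) honoring the peak constraint after pushing the codewords through $\bar{\fA}$, (ii) exploiting \textbf{C4} so that near-orthogonality keeps the induced Poisson means separated in every coordinate despite the off-diagonal residue permitted in $\bar{\bar{\fA}}_{\mathbb{G}_{T^*}}$, and (iii) pinning down precisely how the sparsity rate $l$ and receptor rate $\kappa$ enter $d_{\min}$. The appearance of the constant $\tfrac{\kappa}{4}$ rather than $\tfrac{\kappa}{2}$ hinges on a careful second-moment analysis of the Poisson shells, and I expect that step, together with verifying that the packing survives the non-diagonal part of the Gram matrix, to carry the bulk of the technical weight.
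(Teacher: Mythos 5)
Your architecture coincides with the paper's: achievability by packing spheres whose centers live in the effective rank-$T$ space of induced mean vectors, a threshold decoder with Chebyshev/moment error bounds, and a converse that first forces a minimum pairwise separation of output statistics and then bounds $M$ by a volume ratio under the packing-density bound; the way you apportion the exponents (the $\tfrac12$ from the sphere-volume/Stirling term, the $-\tfrac{\kappa}{4}-l$ from the packing radius, and the switch-off of $2(1-\delta_{\kappa,1})l$ at $\kappa=1$ via $\tau=1$, $T=N$) matches the paper's mechanics. Your idea of supporting codewords on the coordinates $\mathbb{G}_{T^*}$ is in fact a legitimate equivalent of the paper's route: the image of that sub-box under $\bar{\fA}$ is a parallelepiped of $T$-volume $C^T\sqrt{\text{det}\big(\bar{\bar{\fA}}_{\mathbb{G}_{T^*}}\big)}$, which is exactly the lower bound the paper extracts from the zonotope formula (Lemma~\ref{Lem.Affine_Transformation}).

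Two concrete slips, however. First, your claim that the average constraint is slack ``whenever $T=N^{\kappa\tau}=o(N)$'' excludes precisely the boundary case $\kappa\tau=1$, i.e. $T=N$, which the theorem covers (it is the regime of Corollary~\ref{Cor.one}); with $T=N$ a code built with peak $C_{\rm max}$ on all coordinates can violate $N^{-1}\sum_n c_n \leq C_{\rm avg}$. The paper sidesteps this uniformly by capping every codeword entry at $C_{\rm avg}$ instead of $C_{\rm max}$, which only shifts the $\mathcal{O}(T)$ term and leaves the $T\log T$ rate untouched. Second, and more substantively, condition \textbf{C4} does not make the Gram matrix $\bar{\bar{\fA}}_{\mathbb{G}_{T^*}}$ ``essentially diagonal,'' nor does it by itself let Euclidean separation transfer ``faithfully'' to the mean vectors: it only limits the number of non-orthogonal column pairs, and the paper uses it solely, via Hadamard's inequality (Lemma~\ref{Lem.UB_Determinant}), to show $\log \text{det}\big(\bar{\bar{\fA}}_{\mathbb{G}_{T^*}}\big) = o(T\log T)$, so the determinant cannot perturb the leading-order rate. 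The transfer you actually need — that the packing separation survives the decoder's restriction to the $T$ retained coordinates $\mathbb{E}_T$ — is obtained in the paper by a different mechanism: the unitary reduction $\fU_T^{\dag}$ from the SVD together with the fact that a full-rank $T\times T$ sub-matrix of a unitary matrix has all singular values equal to one (Lemmas~\ref{Lem.Matrix_Vector_Product_Bounds} and \ref{Lem.SV_Sub_Matrix_Unitary_Matrix}, feeding Lemma~\ref{Lem.Minimum_Distance_Affine}). An argument leaning on near-orthogonality alone is vulnerable to small non-zero singular values of $\bar{\fA}$, which \textbf{C4} does not exclude. With those two repairs, the steps you defer are exactly the paper's remaining technical content: the fourth Poisson moment bound (Lemma~\ref{Lem.MGF}), which produces the $T^{4l}$ variance and hence the $-l$ in the achievable rate, and the ratio-separation threshold $\theta_T$ of the converse (Lemma~\ref{Lem.Converse}), which produces the $\kappa$ and $l$ terms in the upper bound.
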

% ---
\begin{proof}
The proof consists of achievability and converse and are provided in Subsections~\ref{Subsec.Achievability} and \ref{Subsec.Converse}, respectively. 
\end{proof}
% ---
\vspace{-1mm}
In the subsequent, we outline several key findings that can be derived from Theorem~\ref{Th.Affine-Capacity} and the accompanying proof.

% ---
\textbf{\textcolor{mycolor12}{Codebook Scale:}} Theorem~\ref{Th.Affine-Capacity} shows that the number of reliably identifiable messages (i.e., molecular mixtures) scales super-exponentially in the rank of the affinity matrix, i.e., $M_T(R) = 2^{(T \log T)R}.$ This is different to the Shannon capacity where $M_N(R) = 2^{NR}.$

% ---
\textbf{\textcolor{mycolor12}{Receptors Scale:}} 
Theorem~\ref{Th.Affine-Capacity} reveals that for reliable identification, the number of receptor types $K$ can scale sub-linearly in $N,$ i.e., $K_N(\kappa) = N^{\kappa}$ with $\kappa \in (3/4,1].$

% ---
\textbf{\textcolor{mycolor12}{Affinity Matrix:}} To ensure reliable identification, the affinity matrix $\fA$ must satisfy certain conditions. Through the proof of Theorem~\ref{Th.Affine-Capacity}, we have identified conditions \textbf{C1}-\textbf{C4}, which pose various conditions on the dimension (\textbf{C1}), sparsity (\textbf{C2}), rank (\textbf{C3}) and orthogonality (\textbf{C4}) of matrix $\fA$ to enable reliable identification of a super-exponentially large number of messages in $T$. 

% ---
\textbf{\textcolor{mycolor12}{Olfactory Capability:}}
Theorem~\ref{Th.Affine-Capacity} suggests that for $\kappa \in (3/4,1]$ still super-exponential number of messages are reliably identifiable conditioning that $T \to \infty$ as $N \to \infty.$ An operational meaning of this observation is that when the number of molecule type is sufficiently large, messages are identifiable even with relatively small $K.$

% ---
\begin{customcorollary}{1}[\textbf{Identification Capacity of Poisson Channel}]
\label{Cor.one}
The bounds in Theorem~\ref{Th.Affine-Capacity} recover the previously known results for the identification capacity of standard memoryless Poisson channel \cite[Th.~1]{Salariseddigh-TMBMC} by choosing $K=N,$ (i.e., $\kappa = 1$) and $\fA = \mathbf{I}$ (i.e., $l = 0$ and $T=N$). Then,
% ---
\begin{align}
    \frac{1}{4} \leq \mathbb{C}_{\rm I}(\P_{\rm \fA = \mathbf{I}}) \leq \frac{3}{2} .
\end{align}
% ---
\end{customcorollary}
% ---

\begin{proof}
The proofs of Corollary~\ref{Cor.one} follows by setting $l = 0$ and $\kappa = 1,$ respectively, into \eqref{Ineq.LU}.
\end{proof}
% ---
\vspace{-2mm}
\begin{customcorollary}{2}[\textbf{Identification Capacity of ISI Poisson Channel}]
\label{Cor.two}
The bounds in Theorem~\ref{Th.Affine-Capacity} recover the previously known results for the identification capacity of ISI Poisson channels \cite[Th.~1]{Salariseddigh_OJCOMS_23} where $F_k$ (which can be chosen to capture the number of ISI taps in Poisson channels with ISI) grows sub-linearly in $N,$ i.e., $F_k = N^l, \, \forall k \in [\![K]\!]$ with $l \in [0,1/4),$ and the number of receptors $K$ scales linearly in the molecule types $N,$ i.e., $\kappa = 1,$ and the affinity matrix $\fA$ has a Toeplitz structure with full rank, i.e., $T=N,$ then we recover the results in \cite{Salariseddigh_OJCOMS_23}. That is, the identification capacity reads
% ---
\begin{align*}
    \frac{1}{4} - l \leq \mathbb{C}_{\rm I}(\P_{\rm\fA}) \leq \frac{3}{2} + l .
\end{align*}
% ---
\end{customcorollary}
% ---
\begin{proof}
The proof follows by setting $\kappa = 1,$ into \eqref{Ineq.LU}.
\end{proof}
% ---
\vspace{-1mm}
In below, we provide technical contributions of the proof.
\vspace{-1mm}
% ---
\subsection{Achievability}
\label{Subsec.Achievability}
The achievability proof consists of the following two steps.
% ---
\begin{itemize}[leftmargin=*]
    \item \textbf{\textcolor{mycolor12}{Step 1:}} Codebook construction is demonstrated through the utilization of appropriate packing of hyper spheres.
    \item \textbf{\textcolor{mycolor12}{Step 2:}} We show this codebook guarantees an \textit{achievable} rate by proposing a decoder and proving that the corresponding type I and type II error probabilities vanish as $T \to \infty .$
\end{itemize}
% ---

% ---
\textbf{\textcolor{mycolor12}{Affine Codebook Construction:}}
We develop codewords that fulfill $0 \leq c_n^i \leq C_{\rm avg},$ $\forall \, i \in [\![M]\!] \,, \forall \, n \in [\![N]\!],$ which ensures both constraints in Definition~\ref{Def.Affine_Identification_Code}. In the following, instead of directly constructing the original codebook $\mathbb{C}_0 = \{ \fc^i \} \subset \mathbb{R}_{+}^N,$ we present a construction of a codebook called affine codebook which is a subspace in the receptor array. Let $\bar{\fc}_i$ be a codeword in the receptor array with elements $\bar{c}_k^i \triangleq \sum_{n=1}^N a_{k,n} v_n c_n^i ,\,\forall k \in [\![K]\!].$ Moreover, let $\bar{\fA} = \fU \mathbf{\Sigma} \fW^{\dag}$ denote SVD of matrix $\bar{\fA} \triangleq \fA\fV,$ where $\fU_{K \times K}$ and $\fW_{N \times N}$ are unitary matrices and $\mathbf{\Sigma}_{K \times N}$ is rectangular diagonal matrix consisting of $T$ non-zero eigenvalues of $\bar{\fA}$ on its first $T$ diagonal entries. With these notations, we formally define the original codebook, the affine codebook, and the reduced affine codebook as follows, respectively
% ---
\begin{itemize}[leftmargin=*]
    \item $\mathbb{C}_0 \triangleq \big\{ \fc_i \in \mathbb{R}_{+}^N:\; 0 \leq c_n^i \leq  C_{\rm avg} , \forall \, i \in [\![M]\!] , \forall \, n \in [\![N]\!] \big\} .$
    \item $\bar{\mathbb{C}}_0^{\fA} \triangleq \big\{ \bar{\fc}_{\fA}^i = \bar{\fA}\fc_i \in \mathbb{R}_{+}^K:\, \fc_i \in \mathbb{C}_0, \forall \, i \in [\![M]\!] \big\} .$
    \item $\bar{\bar{\mathbb{C}}}_0^{\fA} \triangleq \big\{ \bar{\bar{\fc}}_{\fA}^i = {\fU}_T^\dag\bar{\fc}_i \in \mathbb{R}_{+}^T:\, \bar{\fc}_i \in \bar{\mathbb{C}}_0^{\fA} , \forall \, i \in [\![M]\!]\big\} .$
\end{itemize}
% ---
where $\fU_T$ is a sub-matrix of $\fU$ containing its first $T$ columns.

Next, in order to study the impact of applying matrix $\fA$ on the input space, we establish a lemma provided in Appendix~\ref{App.Volume_under_Affine_Transformation}.

\textbf{\textcolor{mycolor12}{Rate Analysis:}} We use a packing arrangement of non-overlapping hyper spheres of radius $r_0 = \sqrt{T \epsilon_T}$ in the reduced affine codebook space $\bar{\bar{\mathbb{C}}}_0^{\fA}$ where $\epsilon_T = a / T^{( 2 - ( \kappa + 4l + b)) / 2},$ with $a > 0$ being a fixed constant and $b$ denoting an arbitrarily small constant. Let $\mathscr{S}$ denote a sphere packing, i.e., an arrangement of $M$ non-overlapping spheres $\S_{\bar{\bar{\fc}}_{\fA}^i}(T,r_0),\, i \in [\![M]\!],$ that are packed inside $\bar{\bar{\mathbb{C}}}_0^{\fA}.$ We require that the centers of disjoint spheres are inside $\bar{\bar{\mathbb{C}}}_0^{\fA}$ and have a non-empty intersection with it. The packing density for such arrangement is \cite{CHSN13} $\Updelta_T(\mathscr{S}) \triangleq \text{Vol}\big(\bigcup_{i=1}^{M}\S_{\bar{\bar{\fc}}_{\fA}^i}(T,r_0)\big) / \text{Vol}( \bar{\bar{\mathbb{C}}}_0^{\fA} ).$
% ---

A saturated packing argument is employed here, reminiscent of the one found in the Minkowski--Hlawka theorem for saturated packing discussed in \cite{CHSN13}. Specifically, consider a saturated packing of $\bigcup_{i=1}^{M_N(R)} \S_{\bar{\bar{\fc}}_{\fA}^i}(T,\sqrt{T\epsilon_T})$ spheres with radius $r_0 = \sqrt{T\epsilon_T}$ embedded within the reduced affine codebook $\bar{\bar{\mathbb{C}}}_0^{\fA}.$ Note that the density of such arrangement fulfills $2^{-T} \leq \Updelta_T(\mathscr{S}) \leq 2^{-0.599T},$ \cite[Sec.~IV]{Salariseddigh-TMBMC}. We assign a mixture to the center $\bar{\bar{\fc}}_{\fA}^i$ of each sphere. The reduced affine codewords satisfy $\bar{\bar{\fc}}_{\fA}^i \in \bar{\bar{\mathbb{C}}}_0^{\fA}.$ In general, volume of a hyper sphere with radius $r$ is $\text{Vol}\left(\S_{\bar{\fx}_{\fA}}(T,r)\right) = (\sqrt{\pi}r)^{T} \cdot \Gamma(T/2+1)$ \cite[Eq.~(16)]{CHSN13}.
% ---
% ---
Since centers of spheres lie inside $\bar{\bar{\mathbb{C}}}_0^{\fA},$ the total number of packed spheres, $M,$ is given by
% ---
\begin{align}
    \hspace{-2mm} M \hspace{-.6mm} = \hspace{-.6mm} \frac{\text{Vol}\big(\bigcup_{i=1}^{M}\S_{\bar{\bar{\fc}}_{\fA}^i}(T,r_0)\big)}{\text{Vol}(\S_{\bar{\bar{\fc}}_{\fA}^1}(T,r_0))} \hspace{-.6mm} \stackrel{(a)}{\geq} \hspace{-.6mm} 2^{-T} \hspace{-.5mm} \cdot \hspace{-.5mm} \frac{C_{\rm avg}^T \cdot \sqrt{\text{det}\big( \bar{\bar{\fA}}_{\mathbb{G}_{T^*}} \big)}}{\text{Vol}(\S_{\bar{\bar{\fc}}_{\fA}^1}(T,r_0))} , \hspace{-1.5mm}
    \label{Eq.M}
\end{align}
% ---
where $(a)$ uses the lower bound on the density and Lemma~\ref{Lem.Affine_Transformation}.

Exploiting the formula for volume of hyper sphere, Stirling's approximation, i.e., $\log T! = T \log T - T \log e + o(T)$ \cite[Sec.~IV]{Salariseddigh-TMBMC} substitution of $T$ with $\floor{T/2},$ $\Gamma(T/2 + 1) \geq \floor{T/2} !,$ and $r_0 = \allowbreak \sqrt{T \epsilon_T} = \allowbreak \sqrt{a} T^{(\kappa + 4l + b) / 4},$ we obtain
% ---
\begin{equation}
\hspace{-3mm}\resizebox{.45\textwidth}{!}{
  $\begin{aligned}
    \log M \hspace{-.5mm} \geq \hspace{-.5mm} \left( \frac{2 - (\kappa + 4l + b)}{4} \right) \hspace{-.5mm} T \log T \hspace{-.5mm} + \hspace{-.5mm} T \hspace{-.5mm} \left( \log \frac{C_{\text{avg}}}{\sqrt{ae}} \right) \hspace{-.5mm} + \hspace{-.5mm} \mathcal{O}(T)
    , \hspace{-1mm}
  \end{aligned}$}
  \label{Eq.Log_M_Compact}
\end{equation}
% ---
see Appendix~\ref{App.Rate_Analysis} for details. Observe that the dominant term in \eqref{Eq.Log_M_Compact} is of order $T \log T.$ Hence, to have a non-zero rate $R,$ the codebook size $M$ must scales as $2^{(T \log T)R}.$ Therefore,
% ---
\begin{equation*}
\resizebox{.49\textwidth}{!}{
  $\begin{aligned}
    R \geq \frac{1}{T \log T} \left[ \left( \frac{2 - (\kappa + 4l + b)}{4} \right) T \log T + T \log \left( \frac{C_{\text{avg}}}{\sqrt{ae}} \right) + \mathcal{O}(T) \right] ,
  \end{aligned}$}
\end{equation*}
% ---
which tends to $(2 - (\kappa + 4l))/4$ when $T \to \infty$ and $b \rightarrow 0.$

\textbf{\textcolor{mycolor12}{Encoding:}}
Given message $i\in [\![M]\!],$ transmit $\fx = \fc^i.$

\textbf{\textcolor{mycolor12}{Decoding:}}
Let $\varepsilon_1, \varepsilon_2, \zeta_0, \zeta_1 > 0$ be arbitrarily small constants. Next, for conducting concise analysis, we use definitions in Appendix~\ref{App.Decoding_Conventions}. Now, let us define the \emph{decoding threshold} as follows $\psi_T = 4a / 3T^{( 2 - ( \kappa + 4l + b)) / 2},$ where $a$ is a fixed constant, $b$ is an arbitrarily small constant. To identify if message $j \in \M$ was sent, the decoder checks whether $\mathbf{y}$ belongs to the decoding set $\Z_j = \big\{ \fy \in \mathbb{N}_0^K \;:\, | Z(\fy;\fc^j) | \leq \psi_T \big\},$ with
% \vspace{-1mm}
\begin{align}
    \label{Eq.DM}
    Z(\fy;\fc^j) = T^{-1} \sum_{k \in \mathbb{E}_T} \big( y_k - \big( \bar{c}_k^j + \lambda_k \big) \big)^2 - y_k ,
\end{align}
% ---
where $\mathbb{E}_T \triangleq \allowbreak \{k_1,\ldots,k_T\}$ is a set of $T$ row indices whose corresponding rows are linearly independent.

The detailed proof for vanishing error probabilities is provided in Appendix~\ref{App.Extensive_Error_Analysis}. In the following, we provide a sketch of the proof and adopted key techniques.

% ---
\textbf{\textcolor{mycolor12}{Calculation of Type I Error:}}
The type I errors occur when the transmitter sends $\fc^i,$ yet $\fY\notin\Z_i.$ For every $i \in [\![M]\!],$ the type I error probability is bounded by 
% ---
\begin{align*}
    % \label{Eq.TypeIError}
    P_{e,1}(i) = \Pr\big( \fY(i) \in \Z_i^c \big) = \Pr\big( \big| Z(\fY(i),\fc^i) \big| > \psi_T \big) . 
\end{align*}
% ---
To bound $P_{e,1}(i),$ we apply Chebyshev's inequality, namely
% ---
\begin{equation*}
\vspace{-1mm} \resizebox{.49\textwidth}{!}{
  $\begin{aligned}
    \Pr\big(\big| Z(\fY(i); \fc^i) \hspace{-.6mm} - \hspace{-.6mm}\mathbb{E} \big[ Z(\fY(i); \fc^i) \big] \big| \hspace{-.6mm} > \hspace{-.6mm} \psi_T \big) \hspace{-.6mm} \leq \hspace{-.6mm} \frac{\text{Var} \big[ Z(\fY(i);\fc^i) \big]}{\psi_T^2} .
  \end{aligned}$}
\end{equation*}
% ---
Note $\mathbb{E}\big[ Z(\fY(i); \fc^i) \big] = 0.$ Next, exploiting the property of the statistical independence of observations, $\text{Var} \big[ Y_k(i) \big] = \bar{c}_k^i + \lambda_k \geq 0,$ and Lemma~\ref{Lem.MGF}, we bound the type I error as follows
% ---
\begin{align}
    P_{e,1}(i) \leq \frac{7U_{k,C_{\text{avg}}}}{T \psi_T^2} \triangleq \zeta_1 = \frac{\mathcal{O}(T^{4l})}{T \psi_T^2} = \frac{\mathcal{O}(1)}{T^{b}} \leq \varepsilon_1 ,
\end{align}
% ---
for sufficiently large $T$ and arbitrarily small $\varepsilon_1 > 0.$
% Cf. Appendix~\ref{App.Extensive_Error_Analysis} for details.

\textbf{\textcolor{mycolor12}{Calculation of Type II Error:}}
We examine type II errors, i.e., when $\fY\in\Z_j$ while the transmitter sent $\fc^i$ with $i \neq j.$ Then, for every $i,j \in [\![M]\!],$ the type II error probability reads
% ---
\begin{align}
    P_{e,2}(i,j) = \Pr \big( \big| Z(\fY(i);\fc^j) \big| \leq \psi_T \big) ,
    \label{Eq.Pe2G}
\end{align}
% ---
where $Z(\fY(i);\fc^j) = \omega - \phi ,$ with $$\phi \triangleq T^{-1} \sum_{k \in \mathbb{E}_T} Y_k(i) \quad \text{ and } \quad \omega \triangleq T^{-1} \sum_{k \in \mathbb{E}_T} ( \omega_i + \omega_{i,j} )^2$$
with $\omega_i \hspace{-.4mm} \triangleq \hspace{-.4mm} Y_k(i) - ( \hspace{-.5mm} \bar{c}_k^i + \lambda_k )$ and $\omega_{i,j} \triangleq \bar{c}_k^i - \bar{c}_k^j.$ Observe that $\omega = \omega_1 + \omega_2$ where $\omega_1 \triangleq T^{-1} \sum_{k \in \mathbb{E}_T} \omega_i^2 + \omega_{i,j}^2$ and $\omega_2 \allowbreak \triangleq \allowbreak 2T^{-1} \allowbreak \sum_{k \in \mathbb{E}_T} \allowbreak \omega_i \omega_{i,j}.$ Next, let $\Omega_0 = \{ | \omega_2 | > \psi_T \}$ and $\Omega_1 = \{ \omega_1 - \phi \leq 2\psi_T \}.$ Now, expanding $\omega$ and exploiting the reverse triangle inequality, we obtain $P_{e,2}(i,j) \leq \Pr( \omega - \phi \leq \psi_T ) .$ Hence, utilizing the principle of total probability on event $\D = \big\{ \omega - \phi \leq \psi_T \big\}$ over $(\Omega_0,\Omega_0^c)$ we obtain $$P_{e,2}(i,j) \leq \Pr(\Omega_0 ) + \Pr ( \D \cap \Omega_0^c ) \leq \Pr(\Omega_0) + \Pr(\Omega_1 ).$$
% ---
% ---
Now, we apply the Chebyshev's inequality to bound $\Pr ( \Omega_0 )$
% ---
\begin{align}
    \label{Ineq.E_0_2}
    \Pr(\Omega_0) \leq \frac{16 C_{\text{avg}}^2 (F_k \cdot A_{\text{max}} v_{\text{max}})^2 A_k }{T \psi_T^2} \triangleq \zeta_0 \stackrel{(a)} = \frac{\mathcal{O}(1)}{T^{l + b}} ,
\end{align}
% ---
for sufficiently large $T,$ where $\zeta_0 > 0$ is an arbitrarily small constant and $(a)$ exploits condition \textbf{C2} in Subsection~\ref{Subsec.Main_Results}, i.e., $F_k = \mathcal{O}(T^l), \, \forall k \in [\![K]\!].$ Cf. Appendix~\ref{App.Extensive_Error_Analysis} for details. Next, exploiting Lemma~\ref{Lem.Minimum_Distance_Affine} and standard techniques, we get
% can bound the event $\Omega_1 \hspace{-1mm}:$
% ---
\begin{align}
    \label{Ineq.Omega_1}
    \Pr(\Omega_1) \leq \frac{7U_{k,C_{\rm avg}}}{T\psi_T^2} \triangleq \zeta_1 = \frac{\mathcal{O}(T^{4l})}{T \psi_T^2}
    = \frac{\mathcal{O}(1)}{T^{b}} ,
 \end{align}
% ---
for sufficiently large $T,$ where $\zeta_1 > 0$ is arbitrarily small constant. See Appendix~\ref{App.Extensive_Error_Analysis} for details. Thereby, the type II error probability is upper bounded by $$P_{e,2}(i,j) \leq \allowbreak \Pr(\Omega_0) \allowbreak + \allowbreak \Pr(\Omega_1) \allowbreak \leq \allowbreak \zeta_0 \allowbreak + \allowbreak \zeta_1 \allowbreak \leq \allowbreak \varepsilon_2.$$

We have thus shown that $\forall \varepsilon_1,\varepsilon_2>0$ and sufficiently large $N,$ there is an $(N, M_N(R), \varepsilon_1, \varepsilon_2)$ affine Poisson identification code. This completes the achievability proof of Theorem~\ref{Th.Affine-Capacity}.

% ---
\subsection{Upper Bound (Converse Proof)}
\label{Subsec.Converse}
The converse proof consists of the following two main steps.
% ---
\begin{itemize}[leftmargin=*]
    \item \textbf{\textcolor{mycolor12}{Step~1:}} Lemma~\ref{Lem.Converse} guarantees that for any achievable rate, the distance between shifted symbols for two different molecular mixture is lower bounded by a threshold.
    \item\textbf{\textcolor{mycolor12}{Step~2:}} Employing Lemma~\ref{Lem.Converse}, we derive an upper bound on the codebook size of achievable identification codes.
\end{itemize}
% ---
Before proceeding, for the purpose of conducting a brief analysis, we set the following conventions:
% ---
\begin{itemize}[leftmargin=*]
    \item $\bar{\fY}(i) \triangleq K^{-1} \sum_{k=1}^K Y_k(i) , \, \forall i \in [\![M]\!].$ 
    \item $\bar{A}_{\rm max} \triangleq \underset{k \in \mathbb{E}_T}{\max}\, [ F_k \cdot \left( A_{\text{max}} v_{\text{max}} C_{\rm max} \hspace{-.4mm} + \hspace{-.4mm} \lambda_{\text{max}} \right) ] = \mathcal{O}(T^l) .$
    \item $\mathbb{C}_{0,\text{\tiny conv}} \hspace{-.6mm} \triangleq \hspace{-.6mm} \big\{ \fc_i \in \mathbb{R}_{+}^N: \hspace{-.5mm} 0 \hspace{-.5mm} \leq \hspace{-.5mm} c_n^i \hspace{-.5mm} \leq \hspace{-.5mm} C_{\text{max}} , \forall \, i \in [\![M]\!] , \forall \, n \in [\![N]\!] \big\}.$
    \item $\bar{\mathbb{C}}_{0,\text{\tiny conv}}^{\fA} \triangleq \big\{ \bar{\fc}_{\fA}^i = \bar{\fA}\fc_i \in \mathbb{R}_{+}^K:\, \fc_i \in \mathbb{C}_{0,\text{\tiny conv}} , \forall \, i \in [\![M]\!] \big\}.$
    \item $\bar{\bar{\mathbb{C}}}_{0,\text{\tiny conv}}^{\fA} \hspace{-.5mm} \triangleq \big\{ \bar{\bar{\fc}}_{\fA}^i = {\fU}_T^\dag\bar{\fc}_i \in \mathbb{R}_{+}^T:\, \bar{\fc}_i \in \bar{\mathbb{C}}_{0,\text{\tiny conv}}^{\fA} , \forall \, i \in [\![M]\!]\big\}.$
    \item $d_k^{i} = \bar{c}_k^{i} + \lambda_k \,, \forall k \in [\![K]\!].$
\end{itemize}
% ---
% ---
\begin{customlemma}{7}[\textbf{Shifted Symbol's Distance}]
\label{Lem.Converse}
Suppose that $R$ is an achievable identification rate for the DAPC $\P_{\rm \fA}.$ Consider a sequence of $( N, M_N(R), \varepsilon_1^{(N)}, \varepsilon_2^{(N)})$ codes $(\mathbb{C}_{0,\text{\tiny conv}}^{(N)},\mathscr{Z}^{(N)})$ where $K = N^{\kappa}$ such that $\varepsilon_1^{(N)}$ and $\varepsilon_2^{(N)}$ tend to zero as $N \rightarrow \infty.$ Then, given a sufficiently large $N,$ the associated affine codebook $\bar{\mathbb{C}}_{0,\text{\tiny conv}}^{\fA}$ satisfies the following property. For every pair of affine codewords $\bar{\fc}_{\fA}^{i_1}$ and $\bar{\fc}_{\fA}^{i_2},$ inside $\bar{\mathbb{C}}_{0,\text{\tiny conv}}^{\fA},$ there exists at least one $k' \in [\![K]\!]$ such that $|1 - d_{k'}^{i_2}/d_{k'}^{i_1}| > \theta_T, \, \forall i_1,i_2 \in [\![M]\!],$ 
with $i_1 \neq i_2$ and $\theta_T \triangleq C_{\rm max}/T^{\kappa(1-l\delta_{\kappa,1}) + 2l + b},$ where $b>0$ is an arbitrarily small constant.
\end{customlemma}
% ---
\begin{proof}
    The proof of Lemma~\ref{Lem.Converse} is provided in Appendix~\ref{App.Converse_Lemma}.
\end{proof}
% ---
Next, we use Lemma~\ref{Lem.Converse} to prove the upper bound on the identification capacity. Observe that Lemma~\ref{Lem.Converse} implies that there exists at least one letter $k' \in [\![K]\!]$ such that
\vspace{-1mm}
% ---
\begin{align*}
    \| \bar{\bar{\fc}}_{\fA}^{i_1} - \bar{\bar{\fc}}_{\fA}^{i_2} \| = \| \bar{\fc}_{\fA}^{i_1} - \bar{\fc}_{\fA}^{i_2} \| \geq \big| d_{k'}^{i_1} - d_{k'}^{i_2} \big| \stackrel{(a)}{>} \theta_T d_{k'}^{i_1} \stackrel{(b)}{>} \lambda_{\text{min}} \theta_T ,
\end{align*}
% ---
where $(a)$ uses Lemma~\ref{Lem.Converse} and $(b)$ exploits $d_k^{i_z} = \bar{c}_k^{i_z} + \lambda_k \geq \lambda_{\text{min}}, \, \forall k \in [\![K]\!]$ and for $z \in \{1,2\}$ with $\bar{c}_k^{i_z} \geq 0.$ Thus, we can define an arrangement of non-overlapping spheres $\S_{\bar{\bar{\fc}}_{\fA}^i}(T,\lambda_{\text{min}} \theta_T)$ inside $\bar{\bar{\mathbb{C}}}_0^{\fA}.$ Next, we provide the derivation of the upper bounds on achievable identification rate, $R,$ for $0 < \kappa \leq 1$ in Appendix~\ref{App.Derivation_Upper_Bounds}. Thereby, we obtain
% ---
\begin{align*}
    R \leq \frac{1}{2} + \kappa + l + 2(1 - \delta_{\kappa,1})l
    = \begin{cases}
    \frac{1}{2} + \kappa + 3l & 0 < \kappa < 1 , \\ \frac{3}{2} + l & \kappa = 1 .
    \end{cases}
\end{align*}
% ---
% in the asymptotic of $N$ and $b.$
This completes the converse proof of Theorem~\ref{Th.Affine-Capacity}. 

% ---
\section{Conclusion}
\label{Sec.Conclusion}
We investigated the identification problem for affine DAPC, which can be used to model molecular mixture olfactory communications and include as special cases the conventional ISI-free and ISI channels. We proved that reliable identification can be accomplished with a super-exponentially large mixture codebook, $M = 2^{(T \log T)R},$ even when the number of receptor types scales sub-linearly with $N.$ In addition, we derived lower and upper bounds on achievable identification rate, $R,$ as functions of the parameters of the affinity matrix. Our results have the potential to be expanded in various directions such as multi user extension, random affinity matrix, finite molecule types analysis and practical code construction.

\section*{Acknowledgements}
This work is supported in part by the Federal Ministry of Research, Technology and Space of Germany (BMFTR) within the project IoBNT no. 16KIS1992 and no. 16KIS1988, LOEWE initiative (Hesse, Germany within emergenCITY center (LOEWE/1/12/519/03/05.001(0016)/72), and German Research Foundation (DFG) – GRK 2950 – Project-ID 509922606. H. Boche acknowledges the financial support by the BMFTR program of “Souver\"{a}n. Digital. Vernetzt.”. Joint project 6G-life, project identification number: 16KISK002 and BMFTR project Post Shannon Communication (NEWCOM), no. 16KIS1003K.

\vspace{-4mm}

\section*{}
% ---
\bibliographystyle{IEEEtran}
\bibliography{Lit-Cap-New2}
% ---

\clearpage

\appendices

% -----------------------------------------------
\renewcommand{\thesectiondis}[2]{\Alph{section}:}
\section{Volume under Affine Transformation}
\label{App.Volume_under_Affine_Transformation}

\begin{customlemma}{1}
\label{Lem.Affine_Transformation}
Let $\bar{\fA} \triangleq \fA \fV$ where $\fA$ and $\fV$ are the affinity matrix with rank $T \leq K,$ and the channel response matrix. Moreover, let $\bar{\mathbb{C}}_0^{\fA}$ be the affine codebook generated under the matrix $\bar{\fA}$ in the space $\mathbb{R}_{+}^K$ and the assumption provided in condition $\textbf{C4}$ in Subsection \ref{Subsec.Main_Results} hold. Then,
% ---
\begin{align}
    \label{Eq.Vol_Zonotope}
    \text{Vol}_T(\bar{\mathbb{C}}_0^{\fA}) = \text{Vol}(\bar{\bar{\mathbb{C}}}_0^{\fA}) & = C_{\rm avg}^T \cdot \sum_{\substack{\mathbb{G}_T \subset [\![N]\!]:\,\\ |\mathbb{G}_T| = T}} \sqrt{\text{det}\big( \bar{\bar{\fA}}_{\mathbb{G}_T} \big)}
    \nonumber\\&
    \geq C_{\rm avg}^T \cdot \sqrt{\text{det}\big( \bar{\bar{\fA}}_{\mathbb{G}_{T^*}} \big)} .
  \end{align}
% ---
\end{customlemma}
% ---
\begin{proof}
The proof directly follows from \cite[Propos.~3.3]{Gover10} and is based on establishing a proper decomposition for an affine codebook $\bar{\bar{\mathbb{C}}}_0^{\fA}$ of rank $T$ in $\mathbb{R}^T.$
\end{proof}
% ---

% -----------------------------------------------
\renewcommand{\thesectiondis}[2]{\Alph{section}:}
\section{Rate Analysis}
\label{App.Rate_Analysis}
In the following, we provide detailed derivations for establishment of the lower bound on the logarithm of the total number of packed spheres, $\log M.$ Observe that in order to obtain the simplified lower bound given in \eqref{Eq.Log_M_Compact}, we recall \eqref{Eq.M} and take logarithm from both side to obtain the following
% ---
\begin{align*}
    % \label{Eq.Log_M_0}
    & \log M
    \nonumber\\&
    \geq \log \Big( C_{\rm avg}^T \cdot \sqrt{\text{det}\big( \bar{\bar{\fA}}_{\mathbb{G}_{T^*}} \big)} \Big) - \log \big(\text{Vol}\left(\S_{\bar{\bar{\fc}}^1}(T,r_0)\right) \big) - T
    \nonumber\\&
    \stackrel{(a)}{\geq} \log \sqrt{\text{det}\big( \bar{\bar{\fA}}_{\mathbb{G}_{T^*}} \big)} \hspace{-.5mm} + \hspace{-.5mm} T \log C_{\rm avg} + T \log( 1/\sqrt{\pi} r_0) 
    \nonumber\\&
    + \log \Gamma (( T/2) + 1 ) - T
    % \nonumber\\&
    \stackrel{(b)}{\geq} \log \sqrt{\text{det}\big( \bar{\bar{\fA}}_{\mathbb{G}_{T^*}} \big)} \hspace{-.5mm} + \hspace{-.5mm} T \log C_{\rm avg} \hspace{-.5mm} 
    \nonumber\\&
    - \hspace{-.5mm} T \log r_0 \hspace{-.5mm} + \hspace{-.5mm} \floor{T/2} \log \floor{T/2} \hspace{-.5mm} - \hspace{-.5mm} \floor{T/2} \log e \hspace{-.5mm} + \hspace{-.5mm} o \big( \floor{T/2} \big) \hspace{-.5mm} - \hspace{-.5mm} T ,
\end{align*}
% ---
where $(a)$ exploits the formula for volume of a hyper sphere and $(b)$ follows by exploiting Stirling's approximation, i.e., $\log T! = T \log T - T \log e + o(T)$ \cite[P.~52]{F66} with substitution of $T$ with $\floor{T/2} \in \mathbb{Z},$ and since
% ---
\begin{align}
    \label{Ineq.Gamma_LB}
    \Gamma (( T/2) + 1 ) \stackrel{(a)}{\geq} \floor{T/2} \cdot \Gamma \big( \floor{T/2} \big) 
    \stackrel{(b)}{\triangleq} \floor{T/2} ! ,
\end{align}
% ---
where $(a)$ holds by $\Gamma (( T/2) + 1 ) = (T/2) \cdot \Gamma (T/2)$ for real $T/2$ \cite{Beals10}, $\floor{T/2} \leq T/2$ and the monotonicity of the Gamma function \cite{Beals10} which holds for $T \geq 4.$ $(b)$ holds by the definition of the Gamma function. Now, for $r_0 = \allowbreak \sqrt{T \epsilon_T} = \allowbreak \sqrt{a} T^{(\kappa + 4l + b) / 4}, $ we obtain
% ---
\begin{align}
    & \log M \stackrel{(a)}{\geq} \log \sqrt{\text{det}\big( \bar{\bar{\fA}}_{\mathbb{G}_{T^*}} \big)} - \Big( \frac{\kappa + 4l + b}{4} \Big) T \log T 
    \nonumber\\&
    + T \log \Big( \frac{C_{\rm avg}}{\sqrt{ae}} \Big) + ( (T/2) - 1 ) \log ( (T/2) - 1 ) + \mathcal{O}(T)
    \nonumber\\&
    \stackrel{(b)}{=} \frac{1}{2} \log \text{det}\big( \bar{\bar{\fA}}_{\mathbb{G}_{T^*}} \big) + \left( \frac{2 - ( \kappa + 4l + b )}{4} \right) T \log T
    \nonumber\\&
    + T \log  \Big( \frac{C_{\rm avg}}{\sqrt{ae}} \Big) + \mathcal{O}(T) ,
    \label{Eq.Log_M}
\end{align}
% ---
where $(a)$ follows by $\floor{T/2} > (T/2) - 1$ and $(b)$ holds since $\log(t-1) \geq \log t - 1$ for $t \geq 2$ and $\floor{T/2} \leq (T/2).$ Next, in order to determine the scaling behavior of terms in the right hand side of \eqref{Eq.Log_M}, we present a useful lemma.
% ---
\begin{customlemma}{2}[\textbf{Upper Bound on the Determinant of a Matrix}]
\label{Lem.UB_Determinant}
Let $\fX$ be a square matrix with real entries, dimension $I \times I$ whose $i$-column vector is $x_i$ for every $i \in [\![I]\!].$ Then, it holds $| \text{det}(\fX) | \leq \prod_{i=1}^I \norm{x_i}.$
% ---
\end{customlemma}
% ---
\begin{proof}
    The proof is provided in \cite[Ch.~14]{Garling07} .
\end{proof}
% ---
Next, employing Lemma~\ref{Lem.UB_Determinant} with setting $I=T, \fx_i = \bar{\bar{\fA}}_{\mathbb{G}_{T^*}}^{n_t} $ and condition \textbf{C4} provided in Subsection~\ref{Subsec.Main_Results} we calculate the scaling order for the determinant of matrix $\bar{\bar{\fA}}_{\mathbb{G}_{T^*}}:$
% ---
\begin{align}
    \label{Ineq.Scaling_Function_Achievability}
    & \log \big( \text{det}\big( \bar{\bar{\fA}}_{\mathbb{G}_{T^*}} \big) \big) \leq \log \Big( \prod_{t=1}^T \| \bar{\bar{\fA}}_{\mathbb{G}_{T^*}}^{n_t} \| \Big) 
    \nonumber\\&
    = \sum_{t=1}^T \log \| \bar{\bar{\fA}}_{\mathbb{G}_{T^*}}^{n_t} \| \stackrel{(a)}{\leq} T \log o(T) = o(T \log T) .
\end{align}
% ---
where $(a)$ uses condition \textbf{C4}, cf. Subsection~\ref{Subsec.Main_Results}.

% ---
Therefore, recalling \eqref{Eq.Log_M}, we obtain
% ---
\begin{align}
    \log M & \geq \left( \frac{2 - ( \kappa + 4l + b )}{4} \right) T \log T + T \log \Big( \frac{C_{\rm avg}}{\sqrt{ae}} \Big)
    \nonumber\\&
    + o(T \log T) + \mathcal{O}(T) .
    \label{Eq.Log_M_2}
\end{align}
% ---
Thereby, the dominant term in \eqref{Eq.Log_M_2} is of order $T \log T.$ Hence, for obtaining a finite value for the lower bound of the rate, $R,$ \eqref{Eq.Log_M_2} induces the scaling law of $M$ to be $2^{(T\log T)R}.$ Therefore, we obtain
% ---
\begin{align}
    R & \geq \frac{1}{T \log T} \Bigg[ \left( \frac{2 - (\kappa + 4l + b)}{4} \right) T \log T
    \nonumber\\&\hspace{15mm}
    + T \log \Big( \frac{C_{\rm avg}}{\sqrt{ae}} \Big) + o(T \log T) + \mathcal{O}(T) \Bigg] ,
  \end{align}
% ---
which tends to $(2 - (\kappa + 4l))/4$ when $T \to \infty$ and $b \rightarrow 0.$

% ----------------------------------------------
\section{Decoding Conventions}
\label{App.Decoding_Conventions}
% ---
In the following, for the sake of conducting concise derivations related to the error analysis, we introduce a number of conventions:
% ---
\begin{itemize}[leftmargin=*]
    \item $\bar{c}_k^i \triangleq \sum_{n=1}^N a_{k,n} v_n c_n^i$ is the affine symbol.
    \item $Y_k(i) \sim \text{Pois} (\bar{c}_k^i + \lambda_k )$ is the channel output at receptor $k$ given that $\fx = \fc_n^i .$
    \item The output vector is defined as $\fY(i) = (Y_1(i),\ldots, Y_{K}(i)) .$
    \item $k_{\text{max},1} \triangleq \underset{k \in \mathbb{E}_T}{\argmax}\, \text{Var}[ ( Y_k(i) - ( \bar{c}_k^i + \lambda_k ) )^2].$
    \item $k_{\text{max},2} \triangleq \underset{k \in \mathbb{E}_T}{\argmax}\, \text{Var}[Y_k(i)].$
    \item $A_k \triangleq F_k \cdot \left( A_{\text{max}} v_{\text{max}} C_{\text{avg}} \hspace{-.4mm} + \hspace{-.4mm} \lambda_{\text{max}} \right) = \mathcal{O}(T^l) ,\; \forall k \in [\![K]\!].$
    \item $U_{k,C_{\rm avg}} \triangleq A_k^4 + A_k^3 + A_k^2 + A_k ,\; \forall k \in [\![K]\!].$
    \item Let $\mathbb{E}_T \triangleq \allowbreak \{k_1,\ldots,k_T\}$ denotes the set of $T$ row indices whose corresponding rows are linearly independent with $k_1 < k_2 < \allowbreak \ldots \allowbreak < k_T.$
\end{itemize}
% ---

% ------------------------------
\section{Extensive Error Analysis}
\label{App.Extensive_Error_Analysis}

% ---
\textbf{\textcolor{mycolor12}{Calculation of Type I Error:}}
The type I errors occur when the transmitter sends $\fc^i,$ yet $\fY\notin\Z_i.$ For every $i \in [\![M]\!],$ the type I error probability is bounded by 
% ---
\begin{align*}
    % \label{Eq.TypeIError}
    P_{e,1}(i) = \Pr\big( \fY(i) \in \Z_i^c \big) = \Pr\big( \big| Z(\fY(i),\fc^i) \big| > \psi_T \big) . 
\end{align*}
% ---
To bound $P_{e,1}(i),$ we apply Chebyshev's inequality, namely
% ---
\begin{equation*}
\vspace{-1mm} \resizebox{.49\textwidth}{!}{
  $\begin{aligned}
    \Pr\big(\big| Z(\fY(i); \fc^i) \hspace{-.6mm} - \hspace{-.6mm}\mathbb{E} \big[ Z(\fY(i); \fc^i) \big] \big| \hspace{-.6mm} > \hspace{-.6mm} \psi_T \big) \hspace{-.6mm} \leq \hspace{-.6mm} \frac{\text{Var} \big[ Z(\fY(i);\fc^i) \big]}{\psi_T^2} .
  \end{aligned}$}
\end{equation*}
% ---
Note $\mathbb{E}\big[ Z(\fY(i); \fc^i) \big] = 0.$ Next, exploiting the property of the statistical independence of observations and $\text{Var} \big[ Y_k(i) \big] = \bar{c}_k^i + \lambda_k \geq 0,$ variance of expression in \eqref{Eq.DM} is obtained as
% ---
\begin{equation*}
\vspace{-2mm} \resizebox{.45\textwidth}{!}{
  $\begin{aligned}
    \text{Var}\big[ Z(\fY(i);\fc^i ) \big]
    \hspace{-.6mm} \leq \hspace{-.6mm} T^{-2} \hspace{-1mm} \sum_{k \in \mathbb{E}_T} \text{Var}[ ( Y_k(i) \hspace{-.6mm} - \hspace{-.6mm} ( \bar{c}_k^i + \lambda_k ) )^2 ] .\hspace{-3mm}
    \vspace{-4mm}
    \end{aligned}$}
\end{equation*}
% ---
To establish an upper bound on the sum in above inequality, we present a helpful lemma.

\begin{customlemma}{3}[\textbf{Poisson Moment's Upper Bound}]
\label{Lem.MGF}
Let $X \sim \text{Pois}(\lambda_X)$ be a Poisson RV with mean $\lambda_X.$ Then, it holds $\mathbb{E}[ (X -\lambda_X)^4 ] \leq 7 ( \lambda_X^4 + \lambda_X^3 + \lambda_X^2 + \lambda_X ) .$
\end{customlemma}
% ---
\begin{proof}
    The moment-generating function (MGF) of a Poisson variable $Z\sim\text{Pois}(\lambda_X)$ is $G_Z(\phi) = e^{\lambda_X(e^{\phi}-1)}.$ Hence, for $X=Z-\lambda_X,$ the MGF is given by $G_X(\phi) = e^{\lambda_X (e^{\phi} - 1 - \phi)}.$ Since the fourth non-central moment equals the fourth order derivative of the MFG at $\phi = 0,$ we have
% ---
\begin{align*}
    & \mathbb{E}\{ X^4 \} =
    \frac{d^4}{d\phi^4}G_X(\phi)
    \Big|_{\phi=0}
    \nonumber\\&
    = \lambda_X \big( \lambda_X^3 e^{3\phi} + 6\lambda_X^2 e^{2\phi} + 7\lambda_X e^{\phi} + 1 \big) e^{\phi + \lambda_X e^{\phi} - \lambda_X}
    \big|_{\phi = 0}
    \nonumber\\&
    = \lambda_X^4 + 6 \lambda_X^3 + 7\lambda_X^2 + \lambda_X \leq 7
    \big( \lambda_X^4 + \lambda_X^3 + \lambda_X^2 + \lambda_X \big) .
\end{align*}
% ---
\end{proof}
% ---

Next, we exploit Lemma~\ref{Lem.MGF} and obtain the following
% ---
\begin{align}
   \text{Var}\big[ Z(\fY(i);\fc^i) \big]
   & \stackrel{(a)}{\leq} \frac{T \cdot \mathbb{E} \big[ \big( Y_{k_{\text{max},1}}(i) - \big( \bar{c}_{k_{\text{max},1}}^i + \lambda_{k_{\text{max},1}} \big) \big)^4 \big]}{T^{2}}
   \nonumber\\&
   \stackrel{(b)}{\leq} \frac{7U_{k,C_{\rm avg}}}{T} ,
   \label{Ineq.4thMoment}
\end{align}
% ---
where inequality $(a)$ follows since $\text{Var}[Z] \leq \mathbb{E}[Z^2],$ with $Z = \big( Y_{k_{\text{max},1}}(i) - \big( \bar{c}_{k_{\text{max},1}}^i + \lambda_{k_{\text{max},1}} \big) \big)^2 $ and $(b)$ holds by Lemma~\ref{Lem.MGF}.

Thus, we can bound the type I error probability as follows
% ---
\begin{align}
    P_{e,1}(i) \leq \frac{7U_{k,C_{\text{avg}}}}{T \psi_T^2} \triangleq \zeta_1 = \frac{\mathcal{O}(T^{4l})}{T \psi_T^2} = \frac{\mathcal{O}(1)}{T^{b}} \leq \varepsilon_1 ,
\end{align}
% ---
for sufficiently large $T$ and arbitrarily small $\varepsilon_1 > 0.$

\textbf{\textcolor{mycolor12}{Calculation of Type II Error:}}
We examine type II errors, i.e., when $\fY\in\Z_j$ while the transmitter sent $\fc^i$ with $i \neq j.$ Then, for every $i,j \in [\![M]\!],$ the type II error probability reads
% ---
\begin{align}
    P_{e,2}(i,j) = \Pr \big( \big| Z(\fY(i);\fc^j) \big| \leq \psi_T \big) ,
    \label{Eq.Pe2G_Extensive}
\end{align}
% ---\quad \text{and} \quad
where $Z(\fY(i);\fc^j) = \omega - \phi ,$ with $\phi \triangleq T^{-1} \sum_{k \in \mathbb{E}_T} Y_k(i)$ and $\omega \triangleq T^{-1} \sum_{k \in \mathbb{E}_T} ( \omega_i + \omega_{i,j} )^2$ with $\omega_i \hspace{-.4mm} \triangleq \hspace{-.4mm} Y_k(i) - ( \hspace{-.5mm} \bar{c}_k^i + \lambda_k )$ and 
%\quad \text{and} \quad 
$\omega_{i,j} \triangleq \bar{c}_k^i - \bar{c}_k^j.$ 

Observe that $\omega = \omega_1 + \omega_2$ where $$\omega_1 \triangleq T^{-1} \sum_{k \in \mathbb{E}_T} \omega_i^2 + \omega_{i,j}^2\quad \text{ and } \quad \omega_2 \triangleq  2T^{-1} \sum_{k \in \mathbb{E}_T} \omega_i \omega_{i,j}.$$

Next, let $\Omega_0 = \{ | \omega_2 | > \psi_T \}$ and $\Omega_1 = \{ \omega_1 - \phi \leq 2\psi_T \}.$ Exploiting the reverse triangle inequality, i.e., $|\omega| - |\phi| \leq |\omega - \phi|,$ we get the following bound on the type II error probability
% ---
\begin{align}
    P_{e,2}(i,j) & = \Pr\big( |\omega - \phi| \leq \psi_T \big)
    \nonumber\\&
    \leq \Pr\big( |\omega| - |\phi| \leq \psi_T \big)
    \nonumber\\&
    \stackrel{(a)}{=} \Pr\big( \omega - \phi \leq \psi_T \big) ,
\end{align}
% ---
where $(a)$ follows since $\phi \geq 0$ and $\omega \geq0.$ Now, utilizing the principle of total probability on the event $\D = \big\{ \omega - \phi \leq \psi_T \big\}$ over $\Omega_0$ and its complement $\Omega_0^c \,,$ we obtain
% ---
\begin{align*}
    P_{e,2}(i,j) \stackrel{(a)}{\leq} \Pr\left(\Omega_0 \right) + \Pr \left( \D \cap \Omega_0^c \right) 
    \stackrel{(b)}{\leq} \Pr\left(\Omega_0 \right) + \Pr\left(\Omega_1 \right),
\end{align*}
% ---
where $(a)$ holds by $\D \cap \Omega_0 \subset \Omega_0$ and $(b)$ follows by $\Pr \left( \D \cap \Omega_0^c \right) \leq \Pr\left(\Omega_1 \right),$ which holds by the followings:
% ---
\begin{align}
   \Pr \left( \D \cap \Omega_0^c \right) & = \Pr \big( \big\{ \omega - \phi \leq \psi_T \big\}
   \cap \big\{| \omega_2 | \leq \psi_T \, \big\} \big)
   \nonumber\\&
   \stackrel{(a)}{\leq} \Pr \big( \big\{ \omega_1 - \phi \leq 2 \psi_T \big\} \big)
   \nonumber\\&
   = \Pr \left(\Omega_1 \right) ,
\end{align}
% ---
where $(a)$ holds since $\psi_T-\omega_2\leq 2\psi_T$ given $| \omega_2 | \leq \psi_T.$

We now proceed with bounding $\Pr\left(\Omega_0 \right).$ By Chebyshev's inequality, the probability of this event can be bounded by
% ---
\begin{align}
    & \Pr(\Omega_0) \leq \frac{\text{Var}\big[ \sum_{k \in \mathbb{E}_T} \big( \bar{c}_k^i - \bar{c}_k^j \big) \big( Y_k(i) - \big( \bar{c}_k^i + \lambda_k \big) \big) \big]}{{T^2 \psi_T^2}/{4}}
    \nonumber\\&
    \stackrel{(a)}{=} \frac{\sum_{k \in \mathbb{E}_T} \text{Var} \big[ \big( \bar{c}_k^i - \bar{c}_k^j \big) \big( Y_k(i) - ( \bar{c}_k^i + \lambda_k ) \big) \big]}{{T^2 \psi_T^2}/{4}}
    \nonumber\\&
    \stackrel{(b)}{\leq} \frac{4 (F_k \cdot A_{\text{max}} v_{\text{max}})^2 A_k \big\| \fc^i - \fc^j \big\|_{\infty}^2 }{ T \psi_T^2} ,
    \label{Ineq.E_0_1}
\end{align}
% ---
where $(a)$ follows since channel outputs $Y_k(i)$ for different receptors are independent, and $(b)$ exploits the followings:
% ---
\begin{itemize}[leftmargin=*]
    \item $\underset{n \in [\![N]\!]}{\max}\, | c_n^i - c_n^j | \triangleq \| \fc^i - \fc^j \|_{\infty}.$
    \item $| \bar{c}_{k_{\text{max},2}}^i - \bar{c}_{k_{\text{max},2}}^j | \leq F_k \cdot A_{\text{max}} v_{\text{max}} \cdot \| \fc^i - \fc^j \|_{\infty}.$
    \item $\text{Var}[ Y_{k_{\text{max},2}}(i) ] = \bar{c}_{k_{\text{max},2}}^i + \lambda_{k_{\text{max},2}} \leq A_k .$
\end{itemize}
% ---
Next, observe that
% ---
\begin{align*}
    \| \fc^i - \fc^j \|_{\infty}^2 \hspace{-.6mm} \stackrel{(a)}{\leq} \hspace{-.6mm} ( \| \fc^i \|_{\infty} + \| \fc^j \|_{\infty} )^2 \hspace{-.6mm} \stackrel{(b)}{\leq} \hspace{-.6mm} ( C_{\text{avg}} + C_{\text{avg}} )^2 = 4C_{\text{avg}}^2 ,
\end{align*}
% ---
where $(a)$ holds by the triangle inequality and $(b)$ is valid by $\| \fc^i \|_{\infty} \leq C_{\text{avg}}.$ Thereby, we obtain the below bound on $\Omega_0:$
% ---
\begin{align}
    \label{Ineq.E_0_2_Extensive}
    \Pr(\Omega_0) \leq \frac{16 C_{\text{avg}}^2 (F_k \cdot A_{\text{max}} v_{\text{max}})^2 A_k }{T \psi_T^2} \triangleq \zeta_0 \stackrel{(a)} = \frac{\mathcal{O}(1)}{T^{l + b}} ,
\end{align}
% ---
for sufficiently large $T,$ where $\zeta_0 > 0$ is an arbitrarily small constant and $(a)$ exploits condition \textbf{C2} provided in Subsection~\ref{Subsec.Main_Results}, i.e., $F_k = \mathcal{O}(T^l), \, \forall k \in [\![K]\!].$

We now proceed with bounding $\Pr\left(\Omega_1 \right)$ as follows. To this end, we use the following lemma.
\newpage
% ---

\begin{customlemma}{4}[\textbf{Minimum Distance of Reduced Affine Codebook}]
\label{Lem.Minimum_Distance_Affine}
Let $\mathbb{E}_T \subset [\![K]\!]$ with $|\mathbb{E}_T| = T,$ be a subset of $T$ row indices of matrix $\bar{\fA}$ for which the corresponding rows are linearly independent and $\bar{\fA}_{\mathbb{E}_T}$ be a sub-matrix of $\bar{\fA} = \fA\fV$ containing $T$ of its rows whose indices are in $\mathbb{E}_T.$ Moreover, let $\bar{\fA} = \fU\fSigma\fW^\dag$ be the SVD of $\bar{\fA}$ and $\fU_T$ denote the sub-matrix of $\fU$ containing the first $T$ columns and the $T$ rows whose indices are in $\mathbb{E}_T.$ Then, the minimum distance of reduced affine codebook is given by
% ---
\begin{align}
    d_{\rm min}(\bar{\bar{\mathbb{C}}}_0^{\fA}) \triangleq \underset{\substack{i,j \in [\![M]\!] \\ i \neq j}}{\min} \Big[ \sum_{k \in \mathbb{E}_T} \big( \bar{c}_{k}^i - \bar{c}_{k}^j \big)^2 \Big]
    \geq 4 T \epsilon_T . 
\end{align}
\end{customlemma}
% ---
\begin{proof}
The proof is provided in Appendix~\ref{App.Minimum_Distance_Reduced_Affine_Codebook}.
\end{proof}
\vspace{-1mm}
Note that the codebook construction in Subsection~\ref{Subsec.Achievability} implies that each reduced affine codeword is surrounded by a hyper spheres of radius $\sqrt{T\epsilon_T}.$ Now, exploiting Lemma~\ref{Lem.Minimum_Distance_Affine} establishes a minimum distance for $\bar{\bar{\mathbb{C}}}_0^{\fA},$ i.e.,
% ---
\begin{align}
    \label{Ineq.LB_Codewords_Type_II}
    \sum_{k \in \mathbb{E}_T} \big( \bar{c}_{k}^i - \bar{c}_{k}^j \big)^2 \geq 4 T \epsilon_T .
\end{align}
\vspace{-1mm}
Thus, we can establish the below bound for the event $\Omega_1$:
% ---
\begin{align}
    \label{Ineq.E_1}
    & \Pr (\Omega_1) = \Pr \big( \omega_1 - \phi \leq 2\psi_T \big)
    \nonumber\\& 
    \stackrel{(a)}{\leq} \Pr \Big( \hspace{-.6mm} \sum_{k \in \mathbb{E}_T} \hspace{-.6mm} \big( Y_k(i) - ( \bar{c}_k^i + \lambda_k ) \big)^2 - Y_k(i) \hspace{-.6mm} \leq \hspace{-.6mm} - d_{\rm min}(\bar{\bar{\mathbb{C}}}_0^{\fA}) / 4 \Big)
    \nonumber\\&
    \stackrel{(b)}{\leq} \frac{\text{Var} \big[ \sum_{k \in \mathbb{E}_T} \big( Y_k(i) - \left( \bar{c}_k^i + \lambda_k \right) \big)^2 - Y_k(i) \big]}{T \psi_T^2}
    \nonumber\\&
    \stackrel{(c)}{\leq} \frac{7U_{k,C_{\rm avg}}}{T\psi_T^2} \triangleq \zeta_1 = \frac{\mathcal{O}(T^{4l})}{T \psi_T^2} = \frac{\mathcal{O}(1)}{T^{b}} ,
 \end{align}
% ---
for sufficiently large $T,$ where $\zeta_1 > 0$ is an arbitrarily small constant. Here, 
% ---
\begin{itemize}[leftmargin=*]
    \item $(a)$ follows from \eqref{Ineq.LB_Codewords_Type_II} and the decoding threshold $\psi_T = 4a / 3T^{( 2 - ( \kappa + 4l + b)) / 2},$
    \item $(b)$ holds by the Chebyshev's inequality, and exploiting decoding threshold $\psi_T,$
    \item $(c)$ holds by Lemma~\ref{Lem.MGF} and similar arguments in the type I error analysis provided in \eqref{Ineq.4thMoment}.
\end{itemize}
% ---
Hence, we obtain the following upper bound on the type II error probability

% ---
\begin{align}
    P_{e,2}(i,j) \leq \Pr(\Omega_0) + \Pr(\Omega_1) \leq \zeta_0 + \zeta_1 \leq \varepsilon_2 .
\end{align}
% ---

% -----------------------------------------------
\renewcommand{\thesectiondis}[2]{\Alph{section}:}
\section{Minimum Distance of Reduced Affine Codebook}
\label{App.Minimum_Distance_Reduced_Affine_Codebook}
% ---
\vspace{0mm}
Let define the following conventions:
% ---
\begin{itemize}[leftmargin=*]
	\item $\bar{\fA} = \fU \mathbf{\Sigma} \fW^\dag$ is the SVD of $\bar{\fA}.$
    \item $\fDelta_{i,j} \triangleq \mathbf{\Sigma} \fW^\dag(\fc_i - \fc_j).$
    \item $\fDelta'_{i,j}$ is a sub-vector of $\fDelta_{i,j}$ consisting first $T$ elements.
    \item Let $\fU^1$ be the sub-matrix of $\fU$ containing the rows whose indices are in $\mathbb{E}_T$ and $\fU^2$ be the sub-matrix of $\fU$ containing the remaining rows. Now, observe that left multiplication by an appropriate permutation matrix $\fP_r$ of size $K \times K$ can rearrange the corresponding $T$ rows. Therefore, matrix $\fU$ after multiplication by $\fP_r$ can be partitioned into two sub-matrices $\fU^1$ and $\fU^2,$ i.e., $\fP_r \fU = [\fU_{T \times K}^1 \fU_{K-T \times K}^2]^T.$
    \item $\sum_{k \in \mathbb{E}_T} \big( \bar{c}_{k}^i - \bar{c}_{k}^j \big)^2 = \| \fU^1 \fDelta_{i,j} \|^2.$
\end{itemize}
% ---
Observe that the distance between the codewords in $\bar{\bar{\mathbb{C}}}_0^{\fA}$ reads
% ---
\begin{align}
	\label{Eq.Decomposition}
    & \big\| \bar{\bar{\fc}}_{\fA}^i - \bar{\bar{\fc}}_{\fA}^j \big\|^2 = \norm{\bar{\fc}_i - \bar{\fc}_j}^2 = \| \bar{\fA} (\fc_i - \fc_j)  \|^2 \stackrel{(a)}{=} \|\fU \fDelta_{i,j} \|^2 
    \nonumber\\&
    \stackrel{(b)}{=} \| \fP_r\fU \fDelta_{i,j}  \|^2 =  \| \fU^1 \fDelta_{i,j}  \|^2 +  \| \fU^2 \fDelta_{i,j} \|^2 \geq 4 T \epsilon_T ,
\end{align}
% ---
where $(a)$ exploits the SVD of $\bar{\fA}$ and definition of $\fDelta_{i,j}$ and $(b)$ holds since multiplication with permutation matrix is invariant under Euclidean norm.

Next, we proceed to establish a lower bound for $\| \fU^1 \fDelta_{i,j} \|^2$ in \eqref{Eq.Decomposition}. Consider that only the first $T$ elements of $\fDelta_{i,j}$ are non-zero, which we denote by $\fDelta'_{i,j}.$ Now, since the matrix $\fU$ is unitary, it is full rank, therefore, a subset of such row vectors, e.g., the set of row vectors of $\fU_1$ are linearly independent. This implies that $\fU_1$ is a full row rank matrix.

Next, recalling the definition of the rank for a rectangular matrix, i.e., the size of largest square full rank sub-matrix, we conclude that there exists at least one square full rank sub-matrix of $\fU_1$ denoted by $\fU_T'$ consisting of $T$ rows and $T$ columns. Observe that without loss of generality, we may assume that the $T$ columns of $\fU_T'$ are listed in the first $T$ columns of $\fU_1,$ since otherwise right multiplication of $\fU_1$ by an appropriate permutation matrix $\fP_c$ of size $K \times K$ rearranges the corresponding columns to the first $T$ indices. Thereby, we denote the square sub-matrix constructed by $T$ rows and $T$ permuted columns by $\fU_T.$ Therefore, we obtain
% ---
\begin{align}
    \label{Ineq.U1_LB-0}
	\| \fU^1 \fDelta_{i,j} \|^2 = \| \fU_T \fDelta_{i,j} \|^2 = \| \fU_T \fDelta'_{i,j} \|^2 ,
\end{align}
% ---
where for the second equality we used $\| \fDelta_{i,j} \| = \| \fDelta'_{i,j} \|.$ Next, in order to establish a lower bound for $\| \fU_T \fDelta'_{i,j} \|$ given in \eqref{Ineq.U1_LB-0} which is the norm of product of a square full rank matrix and a vector, we present a useful lemma.
% ---
\begin{customlemma}{5}[\textbf{Bounds on the Norm of Matrix-Vector Product}]
\label{Lem.Matrix_Vector_Product_Bounds}
Let $\fB_{h \times h}$ be a full rank square matrix with $\sigma_{\min}(\fB)$ being the minimum singular values of $\fB.$ Then, for every vector $\fx \in \mathbb{R}^h$ we have $\norm{\fB\fx} \geq \sigma_{\min}(\fB) \cdot \norm{\fx}.$
\end{customlemma}
% ---
\begin{proof}
The proof is provided in Appendix~\ref{App.Matrix_Vector_Product_Bounds}.
\end{proof}
% ---
Hence, we get the following lower bound on the distance between reduced affine codewords:
% ---
\begin{align}
    \label{Ineq.U1_LB}
    \| \fU_T \fDelta'_{i,j} \|^2 & \hspace{-.6mm} \stackrel{(a)}{\geq} \hspace{-.6mm} \sigma^2_{\min}(\fU_T) \cdot \| \fDelta'_{i,j} \|^2
    \hspace{-.6mm} \stackrel{(b)}{\geq} \hspace{-.6mm} 4 \sigma^2_{\min}(\fU_T) \cdot T \epsilon_T , \hspace{-1mm}
\end{align}
% ---
where $(a)$ employs Lemma~\ref{Lem.Matrix_Vector_Product_Bounds} with setting $\fB = \fU_T, \fx = \fDelta'_{i,j}$ and $(b)$ uses $\| \fDelta'_{i,j} \| = \| \fDelta_{i,j} \| = \| \fU \fDelta_{i,j} \|$ and \eqref{Eq.Decomposition}.

Next, to determine whether the smallest singular values of $\fU_T$ scales with parameter $T,$ or not, we present a lemma.

% ---
\begin{customlemma}{6}[\textbf{Singular Values of Sub-Matrix of Unitary Matrix}]
\label{Lem.SV_Sub_Matrix_Unitary_Matrix}
Let $\fU$ be an $I \times I$ unitary matrix and let $\fU_s$ be its $p \times q$ sub-matrix. Then $\fU$ has all singular values less than or equal to one, and the number of singular values less than one (counting also zero singular values) does not exceed $I - \max(p,q).$
\end{customlemma}
% ---
\begin{proof}
    The proof is provided in \cite[Ch.~2]{Fiedler09} .
\end{proof}
% ---
Next, employing Lemma~\ref{Lem.SV_Sub_Matrix_Unitary_Matrix} for unitary matrix $\fU_T$ with substitution of $I = K$ and $p=q=T,$ we conclude that all of the singular values of $\fU_T$ must be one. Thereby, recalling \eqref{Ineq.U1_LB} and exploiting Lemma~\ref{Lem.SV_Sub_Matrix_Unitary_Matrix}, we obtain
% ---
\begin{align}
    \| \fU^1 \fDelta_{i,j} \|^2 \geq 4 \sigma^2_{\min} T \epsilon_T = 4 T \epsilon_T .
\end{align}
% ---

% -----------------------------------------------
\renewcommand{\thesectiondis}[2]{\Alph{section}:}
\section{Proof of Lemma~\ref{Lem.Matrix_Vector_Product_Bounds}}
\label{App.Matrix_Vector_Product_Bounds}
The proof follows by using the SVD for matrix $\fB.$ Let $\fB = \fU \mathbf{\Sigma} \fW^\dag$ be the SVD of $\fB.$ Then,
% ---
\begin{align}
    & \norm{\fB\fx} = \| \fU \mathbf{\Sigma} \fW^\dag \fx \| \stackrel{(a)}{=} \| \mathbf{\Sigma} \fW^\dag \fx \|
    \nonumber\\&
    \stackrel{(b)}{=} \sqrt{\sum_{t=1}^h \sigma_t^2 |(\fW^{\dag}\fx)_t|^2} \stackrel{(c)}{\geq} \sqrt{\sigma_{\min}(\fB)^2 \cdot \sum_{t=1}^h \sigma_t^2 |(\fW^{\dag}\fx)_t|^2}
    \nonumber\\&
    = \sigma_{\min}(\fB) \cdot \| \fW^\dag \fx \|^2 \stackrel{(d)}{=} \sigma_{\min}(\fB) \cdot \norm{\fx} ,
\end{align}
% ---
where $(a)$ holds since matrix multiplication is invariant under unitary matrix $\fU,$ $(b)$ exploits definition of Euclidean norm, $(c)$ holds since $\sigma_{\min}(\fB) \leq \sigma_t \,, \forall t \in [\![h]\!],$ and $(d)$ follows since matrix multiplication is invariant under unitary matrix $\fW^{\dag}.$

% -----------------------------------------------
\renewcommand{\thesectiondis}[2]{\Alph{section}:}
\section{Proof of Lemma~\ref{Lem.Converse}}
\label{App.Converse_Lemma}
In the following, we provide the proof of Lemma~\ref{Lem.Converse}. The method of proof is by contradiction, namely we assume that the condition given in Lemma~\ref{Lem.Converse} is violated and then we show that this leads to a contradiction, i.e., sum of the type I and type II error probabilities converge to one, i.e., $\lim_{N \to \infty} \left[ P_{e,1}(i_1) + P_{e,2}(i_2,i_1) \right] = 1.$

Fix $\varepsilon_1,\varepsilon_2 > 0.$ Let $\eta > 0$ be arbitrarily small constant and $\phi_K \triangleq (K T^{-(l + b)})^{-1/2}$ where $b$ is an arbitrary small constant. Assume to the contrary that there exist two messages $i_1$ and $i_2,$ where $i_1\neq i_2,$ meeting the error constraints in Definition~\ref{Def.Affine_Identification_Code}, such that for all $k \in [\![K]\!],$ we have
% ---
\begin{align}
    \label{Ineq.Converse_Lem_Complement}
    \Big| 1 - \frac{d_k^{i_2}}{d_k^{i_1}} \Big| \leq \theta_T .
\end{align}
% ---
In order to establish a contradiction, we would bound the sum of the two error probabilities, $P_{e,1}(i_1) + P_{e,2}(i_2,i_1),$ from below. To this end, let $\mathscr{F}_{i_1} = \{\fy \in \Z_{i_1} \;:\, K^{-1} \sum_{k = 1}^K y_k \leq \bar{A}_{\rm max} + \phi_K \},$ where $\Z_{i_1} \subseteq \mathbb{N}_0^K$ is the decoding set for message $i_1.$ Next, observe that sum of the errors can be bounded by
% ---
    \begin{align}
    \label{Eq.Error_Sum_1}
    & P_{e,1}(i_1) + P_{e,2}(i_2,i_1)
    \nonumber\\&
    = 1 - \sum_{\fy\in\Z_{i_1}} W^{K} \big( \fy \, \big| \, \fc^{i_1} \big) + \sum_{\fy \in \Z_{i_1}} W^{K} \big( \fy \, \big| \, \fc^{i_2} \big)
    \nonumber\\&
    \geq 1 - \sum_{\fy \in \Z_{i_1}} W^{K} \big( \fy \, \big| \, \fc^{i_1} \big) + \sum_{\fy \in \Z_{i_1} \cap \mathscr{F}_{i_1}} W^{K} \big( \fy \, \big| \, \fc^{i_2} \big) .
    \end{align}
    % ---
    
    Now, consider the first sum over $\Z_{i_1}$ in \eqref{Eq.Error_Sum_1},
    % ---
    \begin{align}
        \label{Eq.LTP}
        & \sum_{\Z_{i_1}} W^{K} \big( \fy \, \big| \, \fc^{i_1} \big)
        \nonumber\\&
        = \sum_{\Z_{i_1} \cap \mathscr{F}_{i_1}} W^{K} \big( \fy \, \big| \, \fc^{i_1} \big) + \sum_{\Z_{i_1} \cap \mathscr{F}_{i_1}^c} W^{K} \big( \fy \, \big| \, \fc^{i_1} \big) ,
    \end{align}
    % ---
    where the equality follows from applying the law of total probability on $\Z_{i_1}$ with respect to $(\mathscr{F}_{i_1},\mathscr{F}_{i_1}^c).$ Next, we provide an upper bound for the second sum in \eqref{Eq.LTP} as follows
    % ---
    \begin{align}
        \label{Eq.LTP2}
        & \sum_{\Z_{i_1} \cap \mathscr{F}_{i_1}^c} W^{K} \big( \fy \, \big| \, \fc^{i_1} \big) = \Pr(\Z_{i_1} \cap \mathscr{F}_{i_1}^c)
        \nonumber\\&
        \leq \Pr( \mathscr{F}_{i_1}^c) = \Pr\big( K^{-1} \sum_{k=1}^K Y_k(i_1) > \bar{A}_{\rm max} + \phi_K \big) .
    \end{align}
    % ---
    Next, let define $\xi_K(i_1) \triangleq \bar{A}_{\rm max} + \phi_K - \mathbb{E} [ \bar{\fY}(i_1) ].$ Then, the application of Chebyshev's inequality to the probability term in \eqref{Eq.LTP2} yields the following upper bound
    % ---
    \begin{align}
        & \Pr \big( \bar{\fY}(i_1) - \mathbb{E} [ \bar{\fY}(i_1) ] > \xi_K(i_1) \big)
        \nonumber\\&
        \stackrel{(a)}{\leq} \frac{\text{Var} [ \bar{\fY}(i_1) ]}{\xi_K^2(i_1)} \stackrel{(b)}{\leq} \frac{\bar{A}_{\rm max}}{K \phi_K^2} = \mathcal{O}(T^{-b}) \leq \eta ,
        \label{Ineq.ErrorI_Complement}
   \end{align}
% ---
for sufficiently large $T,$ where $(a)$ exploits Chebyshev's inequality and $(b)$ uses $\text{Var} [Y_k(i_1)] = \mathbb{E} [Y_k(i_1)] \leq \bar{A}_{\rm max} , \forall \, k \in [\![K]\!].$

Next, recalling the sum of error probabilities in \eqref{Eq.Error_Sum_1}, exploiting the bound \eqref{Ineq.ErrorI_Complement} leads to
% ---
\begin{align}
    \label{Eq.Error_Sum_2}
    & P_{e,1}(i_1)+P_{e,2}(i_2,i_1)
    \nonumber\\&
    \geq 1 - \hspace{-1mm} \sum_{\Z_{i_1} \cap \mathscr{F}_{i_1}} \hspace{-2mm} \big[ W^{K} \big( \fy \, \big| \, \fc^{i_1} \big) - W^{K} \big( \fy \, \big| \, \fc^{i_2} \big) \big] - \eta .
\end{align}
 % ---
Next, employing $d_k^{i_2} - d_k^{i_1} \leq \theta_T d_k^{i_1}$ and $1 - d_k^{i_2} / d_k^{i_1} \leq \theta_T$ from \eqref{Ineq.Converse_Lem_Complement}, we decompose the square brackets in \eqref{Eq.Error_Sum_2}:
% ---
    \begin{align}
        \label{Ineq.Cond_Channel_Diff}
        & W^{K} \big( \fy \, \big| \, \fc^{i_1} \big) - W^{K} \big( \fy \, \big| \, \fc^{i_2} \big)
        \nonumber\\&
        = W^{K} \big( \fy \, \big| \, \fc^{i_1} \big) \cdot \Big[ 1 - \prod_{k=1}^{K} e^{-\theta_T d_k^{i_1}} \left(1 -\theta_T\right)^{y_k} \Big] ,
    \end{align}
    % ---

    % ----
    Now, we bound the product term inside the bracket
    % ---
    \begin{align}
        \label{Ineq.Converse_Product_LB}
        & \prod_{k=1}^{K} \hspace{-.3mm} e^{-\theta_T d_k^{i_1}}\left( 1 - \theta_T \right)^{y_k} = e^{- \theta_T \sum_{k=1}^{K} d_k^{i_1}} \hspace{-.6mm} \cdot \hspace{-.4mm} \left( 1 - \theta_T \right)^{\sum_{k=1}^{K} y_k}
        \nonumber\\&
        \overset{(a)}{\geq} e^{-K \theta_T \bar{A}_{\rm max}} \cdot \left( 1 - \theta_T \right)^{K (\bar{A}_{\rm max} + \phi_K)}
        \nonumber\\&
        = e^{K \theta_T \phi_K } \cdot e^{-K \theta_T (\bar{A}_{\rm max} + \phi_K)} \cdot \left( 1 - \theta_T \right)^{K (\bar{A}_{\rm max} + \phi_K)}
        \nonumber\\&
        \overset{(b)}{\geq} e^{-K \theta_T (\bar{A}_{\rm max} + \phi_K)} \cdot \left( 1 - \theta_T \right)^{K (\bar{A}_{\rm max} + \phi_K)}
        \nonumber\\&
        \overset{(c)}{\geq} e^{-K \theta_T (\bar{A}_{\rm max} + \phi_K)} \cdot \left( 1 - K\theta_T \right)^{\bar{A}_{\rm max} + \phi_K}
        \nonumber\\&
        \stackrel{(d)}{=} f(K\theta_T)
        \stackrel{(e)}{\geq} 1 - 3 ( \bar{A}_{\rm max} + \phi_K ) K \theta_T
        \nonumber\\&
        \geq 1 - \mathcal{O}( \bar{A}_{\rm max} K \theta_K )
        = 1 - \mathcal{O}\big( KT^l / T^{\kappa(1-l\delta_{\kappa,1}) + 2l + b} \big)
        \nonumber\\&
        \stackrel{(f)}{=}
        \begin{cases}
            1 - \mathcal{O}(K^{1 - \tau(\kappa+l+b)}) & 0 < \kappa < 1
            \\
            1 - \mathcal{O}(K^{-b}) & \kappa = 1
        \end{cases}
        % \nonumber\\&
        \hspace{3mm} \geq 1 - \eta ,
    \end{align}
    % ---
    for sufficiently large $K$ where
    % ---
    \begin{itemize}[leftmargin=*]
        \item $(a)$ follows since $d_k^{i_1} \leq \bar{A}_{\rm max},\, \forall k \in [\![K]\!]$ and $\sum_{k=1}^{K} y_k \leq K ( \bar{A}_{\rm max} + \phi_K ),$ and exploiting $\fy \in \mathscr{F}_{i_1}.$
        \item $(b)$ uses $e^{K \theta_T \phi_K} \to 1,$ as $K$ tends to infinity where the exponent reads
        % ---
        \begin{align}
            \label{Eq.K_Theta_Phi_Converse}
            K \theta_T \phi_K = 
            \begin{cases}
               K^{(1 - \tau (2\kappa + 3l + b)) / 2} & 0 < \kappa < 1 ,
               \\
               K^{(- (1 + l + b)) / 2} & \kappa = 1 ,
            \end{cases}
        \end{align}
        % ---
        which given $\tau \geq (\kappa + (1 - \delta_{\kappa,1})l)^{-1}$ vanishes.
        \item $(c)$ employs the Bernoulli's inequality \cite[Ch.~3]{Mitrinovic13} $$(1 - x)^s \geq 1 - sx \; , \; \, 0 \leq \forall x \leq 1 \; , \forall s \geq 1 .$$
        \item $(d)$ exploits the following definition: $f(x) = e^{-tx}(1-x)^t$ with $x = K \theta_T$ and $t = \bar{A}_{\rm max} + \phi_K.$
        \item $(e)$ holds by the Taylor expansion $f(x) = 1 - 2tx + \mathcal{O}(x^2)$ and the lower bound $f(x) \geq 1 - 3tx$ for sufficiently small values of $x.$
        \item $(f)$ uses $T = K^{\tau}$ with $\tau \geq (\kappa + (1 - \delta_{\kappa,1})l)^{-1}.$
    \end{itemize}
    % ---
    Thereby, Equation~\eqref{Ineq.Cond_Channel_Diff} can then be written as follows
    % ---
    \begin{align}
       \label{Ineq.Cond_Channel_Diff2}
       & W^{K} \big( \fy \, \big| \, \fc^{i_1} \big) - W^{K} \big( \fy \, \big| \, \fc^{i_2} \big)
       \nonumber\\&
       \leq W^{K} \big( \fy \, \big| \, \fc^{i_1} \big) \cdot \left[ 1 - e^{- \theta_T \sum_{k=1}^{K} d_k^{i_1}} \cdot \left( 1 - \theta_T \right)^{\sum_{k=1}^{K} y_k} \right]
       \nonumber\\&
       \leq \eta \cdot W^{K} \big( \fy \, \big| \, \fc^{i_1} \big) .
    \end{align}
    % ---
    Now, combining \cref{Eq.Error_Sum_2,Ineq.Cond_Channel_Diff,Ineq.Cond_Channel_Diff2} yields
    % ---
    \begin{align}
        & P_{e,1}(i_1) + P_{e,2}(i_2,i_1) \geq 1 - \eta \cdot \hspace{-2mm} \sum_{\Z_{i_1} \cap \mathscr{F}_{i_1}} W^{K} \big( \fy \, \big| \, \fc^{i_1} \big) - \eta
        \nonumber\\&
        \overset{(a)}{\geq} 1 - \eta \cdot \sum_{\mathscr{F}_{i_1}} W^{K} \big( \fy \, \big| \, \fc^{i_1} \big) - \eta \overset{(b)}{\geq} 1 - 2\eta ,
    \end{align}
    % ----
    where for $(a),$ we replaced $\fy \in \mathscr{F}_{i_1} \cap \Z_{i_1}$ by $\fy \in \mathscr{F}_{i_1}$ to enlarge the domain and for $(b),$ we used $\sum_{\fy \in \mathscr{F}_{i_1}} W^{K} ( \fy \, | \, \fc^{i_1}) \leq 1.$ Clearly, this is a contradiction since the error probabilities tend to zero as $N\rightarrow\infty.$ Thus, the assumption in (\ref{Ineq.Converse_Lem_Complement}) is false. This completes the proof of Lemma~\ref{Lem.Converse}.

% -----------------------------------------------
\renewcommand{\thesectiondis}[2]{\Alph{section}:}
\section{Derivation of The Upper Bounds on The Identification Capacity}
\label{App.Derivation_Upper_Bounds}
% ---
\textbf{\textcolor{mycolor12}{Case 1 ($0 < \kappa < 1):$}} Since the codewords inside $\bar{\bar{\mathbb{C}}}_{0,\text{\tiny conv}}^{\fA}$ all belong to a hyper cube $\mathbb{Q}_{\f0}(T,\bar{A}_{\rm max}),$ it follows that the number of codewords $M,$ is bounded by
% ---
\begin{align}
    \label{Ineq.Codebook_Size_UB}
    M & = \frac{\text{Vol}\big(\bigcup_{i=1}^{M}\S_{\bar{\bar{\fc}}_{\fA}^i}(T,r_0)\big)}{\text{Vol}(\S_{\bar{\bar{\fc}}_{\fA}^1}(T,r_0))} \stackrel{(a)}{\leq}
    \frac{2^{-0.599T} \cdot \text{Vol}\big( \bar{\bar{\mathbb{C}}}_{0,\text{\tiny conv}}^{\fA} \big)}{\text{Vol}(\S_{\bar{\bar{\fc}}_{\fA}^1}(T,r_0))}
    \nonumber\\&
    \stackrel{(b)}{\leq} 2^{-0.599T} \cdot \frac{\bar{A}_{\rm max}^T}{\text{Vol}(\S_{\bar{\bar{\fc}}_{\fA}^1}(T,r_0))} ,
\end{align}
% ---
where $(a)$ exploits the upper bound on the packing density and $(b)$ follows since $\text{Vol}( \bar{\bar{\mathbb{C}}}_{0,\text{\tiny conv}}^{\fA} ) \leq \text{Vol}(\mathbb{Q}_{\f0}(T,\bar{A}_{\rm max})) = \bar{A}_{\rm max}^T.$ Thereby, taking logarithm of \eqref{Ineq.Codebook_Size_UB}, we obtain
% ---
\begin{align}
    \label{Ineq.Log_M_UB}
    \log M \leq T \log \bar{A}_{\rm max} - T \log r_0 - T \log \sqrt{\pi} + \mathcal{O}(T) ,
\end{align}
% ---
where the dominant term is of order $T \log T.$ Hence, for obtaining a finite value for the upper bound of the rate, $R,$ the scaling of $M$ is induced to be super-exponentially in the rank, i.e., $2^{(T \log T)R}.$ Hence, by setting $M = 2^{(T \log T)R},$ $r_0 = \lambda_{\text{min}} C_{\rm max} / T^{\kappa(1-l\delta_{\kappa,1}) + 2l + b},$ and $\bar{A}_{\rm max} = \mathcal{O}(T^l),$ we get the following bound on the achievable identification rate,
% ---
\begin{align}
    \label{Ineq.Rate_UB}
    R \hspace{-.3mm} \leq \hspace{-.3mm} \frac{1}{T \log T} \Big[ \big( \frac{1}{2} \hspace{-.3mm} + \hspace{-.3mm} ( \kappa( 1 - l\delta_{\kappa,1}) + 3l + b ) \big) \, T \log T \hspace{-.3mm} + \hspace{-.3mm} \mathcal{O}(T) \Big] ,
\end{align}
% ---
which tends to $\frac{1}{2} + \kappa + 3l$ as $T \to \infty$ and $b \to 0.$

% ---
\textbf{\textcolor{mycolor12}{Case 2 ($\kappa = 1):$}} We employ the volume formula for $\text{Vol}_T(\bar{\mathbb{C}}_0^{\fA})$ provided in Lemma~\ref{Lem.Affine_Transformation}. That is, instead of a hyper cube with growing edge length, we use the formula given in \eqref{Eq.Vol_Zonotope} for the special case of $T=N.$ Since we have only one subset $\mathbb{G}_N$ with $\mathbb{G}_N = [\![N]\!],$ we obtain $\text{Vol}(\bar{\bar{\mathbb{C}}}_{0,\text{\tiny conv}}^{\fA}) = C_{\rm max}^N \cdot (\text{det}(\bar{\bar{\fA}}_{\mathbb{G}_N}))^{1/2}.$ Hence, the new equations for derivation of rate as provided in \cref{Ineq.Codebook_Size_UB,Ineq.Log_M_UB,Ineq.Rate_UB} become as follows:
% ---
\begin{align}
    \label{Ineq.Codebook_Size_UB-2}
    M & = \frac{\text{Vol}\big(\bigcup_{i=1}^{M}\S_{\bar{\bar{\fc}}_{\fA}^i}(N,r_0)\big)}{\text{Vol}(\S_{\bar{\bar{\fc}}_{\fA}^1}(N,r_0))} \stackrel{(a)}{\leq} \frac{2^{-0.599T} \cdot \text{Vol}\big( \bar{\bar{\mathbb{C}}}_{0,\text{\tiny conv}}^{\fA} \big)}{\text{Vol}(\S_{\bar{\bar{\fc}}_{\fA}^1}(N,r_0))}
    \nonumber\\&
    \leq 2^{-0.599N} \cdot \frac{C_{\rm max}^N \cdot (\text{det}(\bar{\bar{\fA}}_{\mathbb{G}_N}))^{1/2}}{\text{Vol}(\S_{\bar{\bar{\fc}}_{\fA}^1}(N,r_0))} , \hspace{-2mm}
\end{align}
% ---
where $(a)$ exploits the upper bound in packing density. Thereby, $\log M \leq \frac{1}{2} \log \text{det}\big( \bar{\bar{\fA}}_{\mathbb{G}_{N}} \big) + N \log C_{\rm max} - N \log r_0 - N \log \sqrt{\pi} + \mathcal{O}(N).$ Next, in order to determine the scaling behavior of terms in the above terms, we use the same line of argument as provided in the achievability of Theorem~\ref{Th.Affine-Capacity}. That is, we obtain $\log \big( \text{det}\big( \bar{\bar{\fA}}_{\mathbb{G}_N} \big) \big) = o(N \log N).$ Thereby, the dominant term is of order $N \log N.$ Hence, for obtaining a finite value for the upper bound of the rate, $R,$ the scaling of $M$ is induced to be $2^{(N \log N)R}.$ Thus, by setting $M = 2^{(N \log N)R}$ and $r_0 = \lambda_{\text{min}} C_{\rm max} / N^{\kappa(1-l\delta_{\kappa,1}) + 2l + b},$ with $\kappa = 1,$ we obtain the following upper bound on $R$
% ---
\begin{align}
    \label{Ineq.Rate_UB-2}
    R \leq \frac{1}{N \log N} \Big[ \big( \frac{1}{2} + ( 1 - l) + 2l + b ) \big) \, N \log N + \mathcal{O}(N) \Big] ,
  \end{align}
% ---
which tends to $\frac{3}{2} + l$ as $N \to \infty$ and $b \to 0.$

% -----------------------------------------------
\renewcommand{\thesectiondis}[2]{\Alph{section}:}
\section{Spectrum of Codebooks for Different Message Transmission and Identification Settings}
\label{App.Codebook_Spectrum}

In the literature, various codebook sizes have been reported for different communication tasks, some of which include:
% ---
% ---
\begin{itemize}[leftmargin=*]
    \item Identification with randomized encoder over q-ary uniform permutation channels \cite{Sarkar24}: Polynomial size, i.e., any codebook with size of $2^{\epsilon_N N^{q-1}}$ where $\epsilon_N \to 0,$ is identifiable.
    \item Shannon problem \cite{S48}, identification for DMC \cite{Salariseddigh_ICC}, and K-identification for binary symmetric channel \cite{Salariseddigh23_BSC_GC23}: Exponential codebook size, i.e., $M_N(R) = 2^{NR}.$
    \item Identification for affine Poisson channels: Super-exponential in the rank, i.e., $M_N(R) \allowbreak = \allowbreak 2^{(\kappa\tau N^{\kappa\tau}\log N)R}.$
    % with $\kappa \in (3/4,1]$
    % and $(\kappa + (1 - \delta_{\kappa,1})l)^{-1} \leq \tau \leq 1.$
    \item Identification for Gaussian fading channels \cite{Salariseddigh_ITW}, K-identification for Gaussian slow fading channels \cite{Salariseddigh_22_ITW}, Binomial channels \cite{Salariseddigh_Binomial_ISIT} and ISI Poisson channels \cite{Salariseddigh-ICC23}: Super-exponential codebook size, i.e., $M_N(R) = 2^{(N\log N)R}.$
    \item Identification with randomized encoder \cite{AD89}: Double-exponential codebook size, i.e., $M_N(R) = 2^{2^{NR}}.$
    \item Identification for Gaussian channels with noiseless feedback \cite{Wiese22}: Codebook size can be arbitrary large function.
\end{itemize}
% ---

\end{document}